\title{Revisiting ILP Models for Exact Crossing Minimization in Storyline Drawings} %
\author{Alexander Dobler}{TU Wien, Vienna, Austria}{adobler@ac.tuwien.ac.at}{0000-0002-0712-9726}{Vienna Science and Technology Fund (WWTF)  grant [10.47379/ICT19035]}
\author{Michael Jünger}{University of Cologne, Germany}{}{}{}
\author{Paul J.\ Jünger}{University of Bonn, Germany}{paul.j.juenger@gmail.com}{}{}
\author{Julian Meffert}{University of Bonn, Germany}{s6jumeff@uni-bonn.de}{0009-0008-9670-4569}{Deutsche Forschungsgemeinschaft (DFG, German Research Foundation) – 459420781}
\author{Petra Mutzel}{University of Bonn, Germany}{pmutzel@uni-bonn.de}{0000-0001-7621-971X}{partially funded by the Deutsche Forschungsgemeinschaft (DFG, German Research Foundation) – 459420781}
\author{Martin Nöllenburg}{TU Wien, Vienna, Austria}{noellenburg@ac.tuwien.ac.at}{0000-0003-0454-3937}{Vienna Science and Technology Fund (WWTF)  grant [10.47379/ICT19035]}
\authorrunning{A. Dobler, M. Jünger, P. J. Jünger, J. Meffert, P. Mutzel, M. Nöllenburg} %
\newcommand{\tm}{\operatorname{{time}}}
\newcommand{\charac}{\operatorname{{char}}}
\newcommand{\crossings}{\operatorname{{cr}}}
\newcommand{\assign}{\operatorname{{assign}}}
\newcommand{\lin}{\textcolor{lipicsGray}{\textbf{\sffamily LIN}}}
\newcommand{\qdr}{\textcolor{lipicsGray}{\textbf{\sffamily QDR}}}
\newcommand{\plo}{\textcolor{lipicsGray}{\textbf{\sffamily PLO}}}
\newcommand{\mc}{\textcolor{lipicsGray}{\textbf{\sffamily MC}}}
\newcommand{\qone}{\textcolor{lipicsGray}{\textbf{\sffamily Q1}}}
\newcommand{\qtwo}{\textcolor{lipicsGray}{\textbf{\sffamily Q2}}}
\newcommand{\qthree}{\textcolor{lipicsGray}{\textbf{\sffamily Q3}}}
\newcommand{\sbc}{\textcolor{lipicsGray}{\textbf{\sffamily SBC}}}
\newcommand{\init}{\textcolor{lipicsGray}{\textbf{\sffamily INIT}}}
\newcommand{\rnd}{\textcolor{lipicsGray}{\textbf{\sffamily RND}}}
\newtheorem{problem}{Problem}
\keywords{Storyline drawing, crossing minimization, integer linear programming, algorithm engineering, computational experiments} %
\begin{document}

\maketitle

\begin{abstract}
Storyline drawings are a popular visualization of interactions of a set of characters over time, e.g., to show participants of scenes in a book or movie. 
Characters are represented as $x$-monotone curves that converge vertically for interactions and diverge otherwise. 
Combinatorially, the task of computing storyline drawings reduces to finding a sequence of permutations of the character curves for the different time points, with the primary objective being crossing minimization of the induced character trajectories.
In this paper, we revisit exact integer linear programming (ILP) approaches for this \NP-hard problem. 
By enriching previous formulations with additional problem-specific insights and new heuristics, we obtain exact solutions for an extended new benchmark set of larger and more complex instances than had been used before. 
Our experiments show that 
our enriched formulations lead to better performing algorithms when compared to state-of-the–art modelling techniques. 
In particular, our best algorithms are on average 2.6--3.2 times faster than the state-of-the-art and succeed in solving complex instances that could not be solved before within the given time limit. Further, we show in an ablation study that our enrichment components contribute considerably to the performance of the new ILP formulation.

\end{abstract}

\section{Introduction} 
\begin{figure}[t]
    \centering
    \begin{subfigure}{\textwidth}
        \includegraphics[width=\textwidth]{./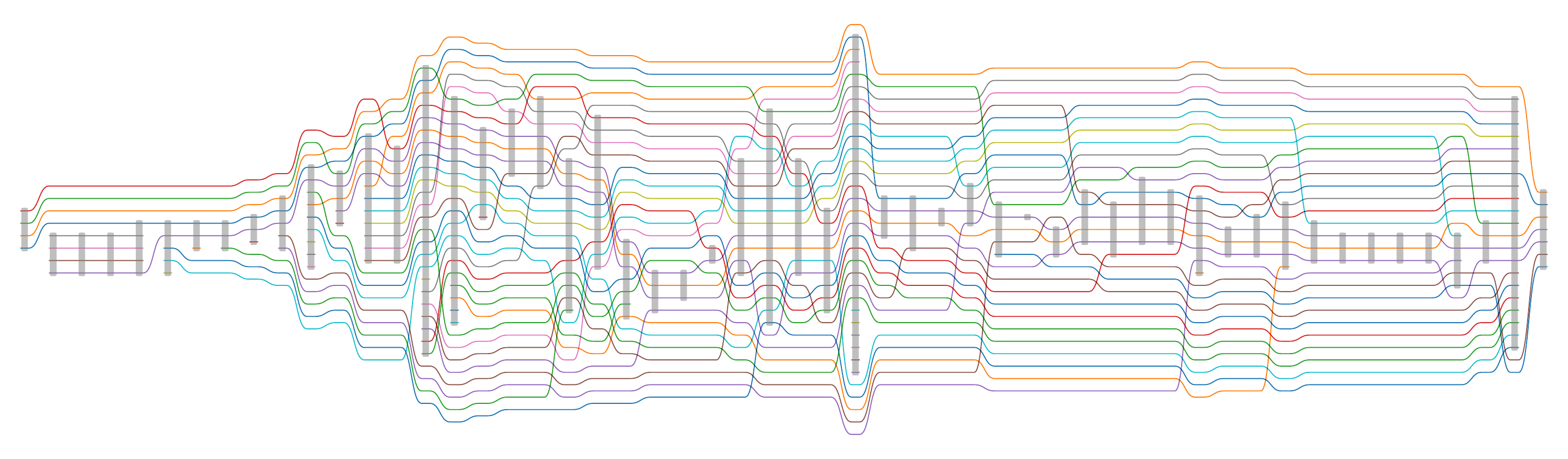}
        \caption{Solution with 374 crossings computed in \qty{0.15}{\second} by a greedy heuristic.}
        \label{fig:potter-input}
    \end{subfigure}
    \begin{subfigure}{\textwidth}
        \includegraphics[width=\textwidth]{./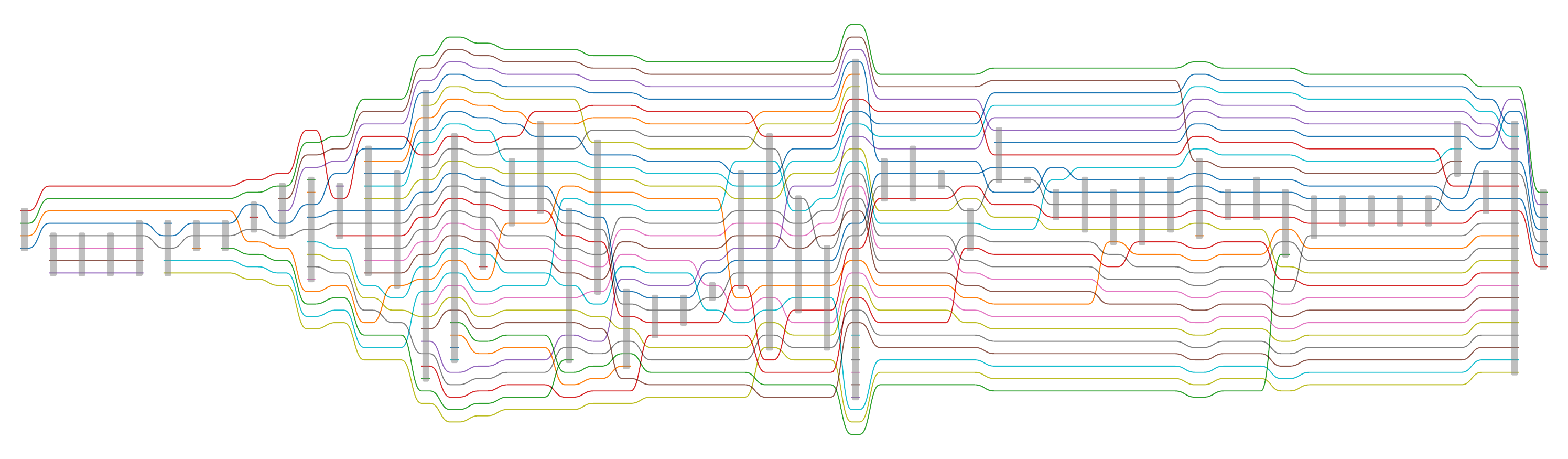}
        \caption{Drawing with the minimum number of 236 crossings computed in \qty{401.23}{\second} by our new ILP formulation.}
        \label{fig:potter-result}
    \end{subfigure}    
    \caption{Storylines for the first Harry Potter movie. Interactions are shown as vertical gray bars.}\label{fig:potter}
\end{figure}
Storyline drawings are a well-studied visualization style for complex event-based temporal interaction data and have been popularized by the xkcd comic ``Movie Narrative Charts'' in 2009~\cite{m-mnc-09}.
They show a set of characters, e.g., from the plot of a movie or book, and how they interact or co-occur in a sequence of events over time, e.g., by participating in the same scene or conversation of the evolving story.
A storyline drawing is a two-dimensional diagram, where the $x$-axis represents time and the $y$-dimension is used for the vertical grouping of characters according to their interaction sequence. The exact temporal distance of interactions is usually not depicted, only their order.
Each character is represented as an $x$-monotone curve, and interactions are represented by vertically grouping the curves of the participating characters at the $x$-coordinate corresponding to the interaction time.
Characters that are not participating in an interaction at any specific point in time are vertically separated from each other.
\Cref{fig:potter} shows an example of a storyline drawing.
Due to their popularity and the intuitive data encoding, they are well suited for visual storytelling and have since been used as visual metaphors for representing a variety of different event-based data sets beyond the original book and movie plots~\cite{m-mnc-09,tm-dcosv-12}, e.g., for software projects~\cite{thm-efgsvfsd-15,om-ses-10}, newspaper articles~\cite{axpab-hiudh-22}, political debates on social media~\cite{lwwll-stes-13}, visual summaries of meeting contents~\cite{sbbzzm-mvnarmcc-18}, scientific collaborations~\cite{h-sfwk-22}, sports commentary~\cite{wszlcl-evrwteed-24}, and gameplay data~\cite{wwd-vsegusvswll-23}.

There is one main degree of freedom when computing and optimizing storyline drawings: the (vertical) linear order and positioning of the characters at each discrete time steps. 
The only hard constraint is that all characters participating in the same interaction must be consecutive as a group.
This degree of freedom can thus be used to minimize the number of crossings between character curves, their wiggles (i.e. the amount of vertical movement of character lines in the visualization), and any excessive white space in the diagram, which are the three major objectives that have been identified for computing storyline drawings~\cite{tm-dcosv-12,thm-efgsvfsd-15}.
While the number of crossings is determined purely combinatorially by the sequence of character permutations, wiggles and white space depend on the actual $y$-coordinates assigned to each character curve at each point in time.

In this paper, our interest is the combinatorial crossing minimization problem. 
It is the primary objective in practical storyline optimization pipelines~\cite{lwwll-stes-13}, where it forms the input to the subsequent steps of reducing wiggles and white space while maintaining the character order.
Additionally, crossing minimization is one of the most fundamental graph drawing problems~\cite{s-cng-18,bett-gd-99} and it is well known that graph drawings with fewer crossings increase readability~\cite{p-wagehu-97}. 
Unfortunately, crossing minimization in storyline drawings is an \NP-hard problem~\cite{knpss-mcsv-15,gj-cnn-83} and hence practical approaches for storyline visualization usually resort to %
heuristics, even though they cannot guarantee optimal solutions.  

We consider the crossing minimization problem from the opposite side and revisit exact integer linear programming (ILP) approaches~\cite{GronemannJLM16} for computing provably optimal solutions. Such approaches often lead to practical exact algorithms.
Our goal is to improve on the runtime performance of such exact methods by enriching the models with both new problem-specific insights and better heuristics. %
Faster exact algorithms for crossing minimization in storyline drawings are practically relevant for two reasons: firstly, solving moderately sized instances to optimality within a few seconds provides a strictly better alternative to commonly used suboptimal heuristics, and secondly, knowing optimal solutions for a large set of representative benchmark instances (even if their computations take several minutes or up to a few hours) is a prerequisite for any thorough experimental study on the performance of non-exact crossing minimization heuristics and for generating crossing-minimum stimuli in user experiments.

\subparagraph{Related Work.}
Tanahashi and Ma~\cite{tm-dcosv-12} introduced storyline drawings as an information visualization problem, provided the first visual encoding model, and defined the above-mentioned optimization criteria (crossings, wiggles, white space). 
They suggested a genetic algorithm to compute storyline drawings. 
Ogawa and Ma~\cite{om-ses-10} used a greedy algorithm to compute storylines to depict software evolution. Due to slow computation times of previous methods, Liu et al.~\cite{lwwll-stes-13} split the layout process into a pipeline of several subproblems ordered by importance, the first one being crossing minimization. 
They solved the character line ordering by an iterated application of a constrained barycenter algorithm, a classic heuristic for multi-layer crossing minimization~\cite{DBLP:journals/tsmc/SugiyamaTT81}. 
Their results were obtained in less than a second and had fewer crossings than those computed by the genetic algorithm~\cite{tm-dcosv-12}, which took more than a day to compute on some of the same instances. 
Tanahashi et al.~\cite{thm-efgsvfsd-15} enhanced previous methods to take into account streaming data and apply a simple sequential left-to-right sorting heuristic. %
Recent practical works on storyline drawings focus on other aspects, such as an interactive editor~\cite{trlcyw-iechs-18} or a mixed-initiative system including a reinforcement learning AI component~\cite{tlwlkk-pcesvurl-20}; both these systems apply a two-layer crossing minimization heuristic~\cite{f-fshctcr-04}.  %

Several authors focused on the combinatorial crossing minimization problem and its complexity. 
Kostitsyna et al.~\cite{knpss-mcsv-15} observed that the \NP-hardness of the problem follows from a similar bipartite crossing minimization problem~\cite{gj-cnn-83} and proved fixed-parameter tractability when the number of characters is bounded by a parameter $k$. 
Gronemann et al.~\cite{GronemannJLM16} were the first to model the problem as a special type of tree-constrained multi-layer crossing minimization problem. They implemented an exact branch-and-cut approach that exploits the equivalence of the quadratic unconstrained 0/1-optimization problem with the maximum cut problem in a graph.
They managed to solve many instances with up to 20 characters and 50 interactions optimally within a few seconds.
Van Dijk et al.~\cite{dlmw-csvwbc-17,dfflmr-bcsv-17} proposed block-crossing minimization in storyline drawings, which counts grid-like blocks of crossings rather than individual crossings. 
They showed \NP-hardness and proposed greedy heuristics, a fixed-parameter tractable algorithm, and an approximation algorithm. 
In a follow-up work, van Dijk et al.~\cite{dlmw-csvwbc-17} implemented and experimentally evaluated an exact approach for the block crossing minimization problem using SAT solving.
A different variation of storylines was studied by Di Giacomo et al.~\cite{di2020storyline}, who considered ubiquitous characters as $x$-monotone trees with multiple branches, enabling characters to participate in multiple simultaneous interactions; they solved the crossing minimization aspect using an adaptation of the previous SAT model~\cite{dlmw-csvwbc-17}.
Dobler et al.~\cite{DoblerNSVW23} consider time interval storylines, where additionally to the order of characters, the order of time steps in so-called time-intervals can be permuted.

The problem is also similar to crossing minimization in layered graph drawing, which was introduced by Sugiyama et al.~\cite{DBLP:journals/tsmc/SugiyamaTT81}. The problem is to draw a graph with its vertices on multiple parallel lines while minimizing crossings. A notable difference to storyline crossing minimization is that vertices can have arbitrary degree and that edges can span more than one layer. For a survey of algorithms and techniques in layered graph drawing, we refer to Healy and Nikolov~\cite{DBLP:reference/crc/HealyN13}.

\subparagraph{Contributions.}
\sloppy
The contributions of this paper are the following: 
\begin{itemize}
    \item We identify structural properties of storyline drawings and prove that there exist crossing-minimum drawings satisfying them, reducing the search space of feasible solutions.
    \item We propose a new ILP formulation exploiting these structural insights in order to (i) significantly reduce the number of required constraints and (ii) apply symmetry breaking constraints to strengthen the ILP model. %
    \item We introduce several new heuristics that support the exact solver, either as initial heuristics to improve branch-and-bound pruning or for deriving integral solutions from fractional ones during the incremental ILP solving process. 
    \item We have compiled a new benchmark set of storyline instances, including those of earlier studies, as well as several challenging new ones.
    \item We have conducted a detailed experimental evaluation of our new ILP model using the above benchmark set. We compare its ability to solve instances with state-of-the-art ILP models. Moreover, in an ablation study, we show that our further enhancements (e.g., adding symmetry breaking constraints and novel heuristics) contributes considerably to the performance of both the new and several state-of-the-art ILP formulations.
    \item We show that our ILP models are able to solve previously unsolved instances from the literature and obtain a speedup of 2.6--3.2 compared to the state of the art.
\end{itemize}
Data sets, source code, evaluation, and a visualization software are available on \href{https://doi.org/10.17605/OSF.IO/3BUA2}{\texttt{https://osf.io/3bua2/}}.

\section{Preliminaries}
\subparagraph{Permutations.}
Given a set $X=\{x_1,\ldots,x_n\}$, a permutation $\pi$ is a linear order of its elements, or equivalently, a bijective mapping from $\{1,2,\dots,|X|\}$ to $X$. For $x,x'\in X$ we write $x\prec_{\pi}x'$ if $x$ comes before $x'$ in $\pi$.
For $Y\subseteq X$, $\pi[Y]$ is the permutation $\pi$ restricted to $Y$, formally, for $y,y'\in Y$,  $y\prec_{\pi[Y]}y'$ if and only if $y\prec_{\pi}y'$. For two permutations $\pi,\phi$ of two sets $X$ and $Y$ with $X\cap Y=\emptyset$, we denote by $\pi\star \phi$ their concatenation. Given two permutations $\pi,\pi'$ of the same set $X$, the \emph{inversions} between $\pi$ and $\pi'$ is the number of pairs $x,x'\in X$ such that $\pi^{-1}(x)<\pi^{-1}(x')$ and $\pi'^{-1}(x)>\pi^{-1}(x')$.

\subparagraph{Problem input.}
\begin{figure}
    \centering
    \includegraphics{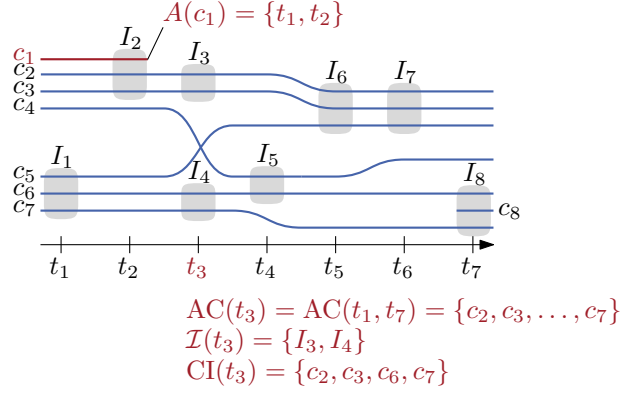}
    \caption{Illustration of important notation throughout this paper with the aid of a storyline drawing depicting interactions {$I_1$}--{$I_8$} of the characters {$c_1$}--{$c_8$} over the time steps {$t_1$}--{$t_7$}}
    \label{fig:concepts}
\end{figure}
A \emph{storyline instance} consists of a 4-tuple $(T,\mathcal{C},\mathcal{I}, A)$ where $T=\{t_1,t_2,\dots,t_\ell\}$ is a set of totally ordered \emph{time steps} (or \emph{layers}), $\mathcal{C}=\{c_1,c_2,\dots,c_n\}$ is a set of \emph{characters}, and $\mathcal{I}=\{I_1,I_2,\dots,I_m\}$ is a set of interactions. Each interaction $I\in \mathcal{I}$ has a corresponding time step $\tm(I) \in T$ and consists of a set of characters $\charac(I) \subseteq \mathcal{C}$. Further, $A$ maps each character $c\in \mathcal{C}$ to a consecutive set of time steps, i.e., $A(c)=\{t_i,t_{i+1},\dots, t_j\}$ for $1\le i\le j\le \ell$. We say that character $c$ is \emph{active} at the time steps in $A(c)$, it \emph{starts} at $t_i$ and \emph{ends} at $t_j$.
We define $\operatorname{AC}(t)$ for $t\in T$ as the set of all characters $c\in \mathcal{C}$ active at time $t$, i.e., %
$\operatorname{AC}(t)=\{c\in \mathcal{C}\mid t\in A(c)\}$.
Clearly, for each interaction $I\in\mathcal{I}$, $\charac(I)\subseteq \operatorname{AC}(\tm(I))$ must hold.
Next, we define the set of all characters active in the time interval $[t_i,t_j]$ ($1\le i\le j\le \ell$) as $\operatorname{AC}(t_i,t_j)=\operatorname{AC}(t_i)\cap \operatorname{AC}(t_{i+1})\cap\dots\cap \operatorname{AC}(t_j)$. 
For a time step $t\in T$ we define the set of interactions at $t$ as $\mathcal{I}(t)=\{I\in\mathcal{I}\mid \tm(I)=t\}$ and its corresponding set of characters as $\operatorname{CI}(t)=\bigcup_{I\in \mathcal{I}(t)}\charac(I)$. Without loss of generality, for the interactions at time step $t$ we assume that $|\mathcal{I}(t)|\ne 0$ and that the sets of characters of the interactions $\mathcal{I}(t)$ are pairwise disjoint. This is a reasonable assumption as characters usually participate in at most one interaction at any given time, e.g.\ in movies. Important notation is also illustrated in \cref{fig:concepts}.
\subparagraph{Problem output.}
Solutions to storyline instances $(T,\mathcal{C},\mathcal{I}, A)$ consist of a sequence of $\ell$ permutations $S=(\pi_1,\pi_2,\dots,\pi_\ell)$ such that $\pi_i$ is a permutation of $\operatorname{AC}(t_i)$ for all $i=1,\ldots,\ell$ satisfying the condition that the set of characters of each interaction $I\in \mathcal{I}(t_i)$ 
appears consecutively.
We call $S$ a \emph{storyline solution} or \emph{drawing}.

The number of crossings $\text{cr}(\pi_i, \pi_{i+1})$ between two consecutive permutations $\pi_i$ and $\pi_{i+1}$ is defined as the number of inversions of the two permutations $\pi_i[C]$ and $\pi_{i+1}[C]$, where $C=\operatorname{AC}(t_i)\cap \operatorname{AC}(t_{i+1})$. The number of crossings in a storyline solution is $\sum_{i=1}^{\ell-1}\text{cr}(\pi_i,\pi_{i+1})$. The problem addressed in this paper is the following.
\begin{problem}[Storyline Problem]\label{prob:storyline}
    Given a storyline instance $(T,\mathcal{C},\mathcal{I},A)$, find a storyline drawing $S$ with the minimum number of crossings.
\end{problem}

\section{Standard Models for the Storyline Problem}\label{section:models} %
The most natural ILP formulation to solve \cref{prob:storyline} has a quadratic objective function and is based on the linear ordering model, which uses binary variables in order to encode the linear ordering at each time step. The number of crossings between two subsequent time steps is then given by the number of inversions of the two permutations.

From now on, we assume that characters $c_u,c_v,c_w$ are pairwise different, even if we write for example $c_u,c_v\in C$ for some set $C$ of characters.

\subsubsection*{Quadratic Model (QDR)}
For each time step $t_i,i=1,2,\dots,\ell$ and each tuple of characters $c_u,c_v\in \operatorname{AC}(t_i)$ we introduce a binary \emph{ordering variable} $x_{i,u,v}$ which is equal to 1 if and only if $c_u\prec_{\pi_i}c_v$.
The quadratic model QDR is given as follows:

\begin{align}
    \min \quad & \sum_{i=1}^{\ell-1}\sum_{c_u,c_v\in \operatorname{AC}(t_i,t_{i+1})}x_{i,u,v}x_{i+1,v,u}   &   &  \tag{QDR}\label{QDR-ILP:c}\\
  x_{i,u,v} = 1- x_{i,v,u} &\qquad \textrm{for all } i=1,\ldots,\ell; c_u,c_v\in \operatorname{AC}(t_i) \textrm{ with } u<v \tag{EQ} \label{EQ-Con}\\
        x_{i,u,v}+x_{i,v,w}+x_{i,w,u}\le 2 &\qquad \textrm{for all }  i=1,\ldots,\ell; c_u,c_v,c_w\in \operatorname{AC}(t_i) %
        \tag{LOP} \label{LOP-Con}\\
   x_{i,u,w}=x_{i,v,w} &\qquad \textrm{for all }  i=1,\ldots,\ell; I\in {\cal I}(t_i); \tag{TREE} \label{TREE-Con} \\
   &\qquad  c_u,c_v\in \charac(I), u<v; c_w\in \operatorname{AC}(t_i)\setminus \charac(I)\notag \\
   x_{i,u,v}\in\{0,1\} &\qquad \textrm{for all } i=1,\ldots,\ell; c_u,c_v\in \operatorname{AC}(t_i), \tag{BIN} \label{BIN-Con}
\end{align}

 The character curves for $c_u$ and $c_v$ cross between the two layers $t_i$ and $t_{i+1}$ if and only if
 one of the terms $x_{i,u,v}x_{i+1,v,u}$ and $x_{i,v,u}x_{i+1,u,v}$ equals $1$.
The \eqref{LOP-Con} and \eqref{EQ-Con} constraints ensure transitivity of the set of characters $\operatorname{AC}(t_i)$ present at time step $t_i$ and guarantee that they define a total order.
For all interactions $I\in {\cal I}(t_i)$ %
the \eqref{TREE-Con} constraints ensure that characters from $I$ appear consecutively at the respective time step $t_i$. %

\subsubsection*{Linearized Model (LIN)}
The standard linearisation of quadratic integer programs introduces additional variables $y_{i,u,v}$ that substitute the quadratic terms $x_{i,u,v}x_{i+1,v,u}$ for all $t_i$, $i=1,2,\dots,\ell-1$ and each tuple of characters $c_u,c_v\in \operatorname{AC}(t_i,t_{i+1})$ in the objective function. %
In order to link the new variables with the ordering variables, we introduce the following constraints:
\begin{align*}
    & \qquad\qquad y_{i,u,v}\ge x_{i,u,v}-x_{i+1,u,v} & \quad \textrm{for all } i=1,\dots,\ell; c_u,c_v\in \operatorname{AC}(t_i,t_{i+1}) \tag{CR} \label{CR-Con}
\end{align*}
Obviously, the variable $y_{i,u,v}$ is forced to $1$, if the character $c_u$ is before $c_v$ at time step $t_i$ in the solution represented by the $y$-variables, and the order of both characters is reversed at time step $t_{i+1}$.
The linearised model \eqref{LIN-ILP} is given as follows.
\begin{align*} 
\qquad\qquad  \min & \quad \sum_{i=1}^{\ell-1}\sum_{c_u,c_v\in \operatorname{AC}(t_i,t_{i+1})}y_{i,u,v}  & & \tag{LIN} \label{LIN-ILP}\\
 & \quad y_{i,u,v},x_{i,u,v} \text{ satisfy  (\ref{BIN-Con}), (\ref{EQ-Con}), (\ref{LOP-Con}), (\ref{TREE-Con}), and (\ref{CR-Con})}  
\end{align*}

\subsubsection*{Max-Cut Model (CUT)}

Gronemann et al.~\cite{GronemannJLM16} have suggested a formulation based on the transformation of the problem into a quadratic unconstrained binary program with additional (TREE) constraints, which is then solved using a maximum cut approach. Here, we omit the detour via the quadratic binary program and directly provide the corresponding maximum cut formulation. 
Starting with a feasible storyline drawing $\hat{S}=(\hat{\pi}_1,\dots,\hat{\pi}_\ell)$, we define the graph $G_M=(V_M,E_M)$: 
The vertex set $V_M$ is given by a vertex $v^*$ and the union of the sets $V_i$ ($i=1,\ldots,\ell)$, where $V_i$ has a vertex $c_{uv}^i$ for each pair $c_u,c_v\in \operatorname{AC}(t_i)$ with $c_u\prec_{\hat{\pi}_i}c_v$.%

We introduce an edge between the vertices $c_{uv}^i$ and $c_{pq}^{i+1}$ if the corresponding characters coincide.
In the case that $c_u=c_p$ and $c_v=c_q$, the (type-1) edge $e=e_{uv}^i$ gets a weight of $w_e=-1$, and in the case that $c_u=c_q$ and $c_v=c_p$, the (type-2) edge $e=e_{uv}^i$ gets a weight of $w_e=1$.
We define the constant $K$ as the number of edges of type (2). 
The intention is the following: 
An edge of type (1) results in a crossing if and only if it is in the cut, and an edge of type (2) results in a crossing if and only if it is not in the cut.
This construction allows for associating the maximum cut objective function values $W$ to corresponding crossing numbers $K-W$. 
In particular, $W=0$ for the empty cut corresponds to the number of crossings $K$ in $\hat{S}$.
In order to guarantee that the characters of an interaction appear consecutively, we introduce type-3 edges with weight 0 from the additional vertex $v^*$ to every vertex in $V_i$ for all $i=1,\ldots,\ell$, and add the additional constraints~(\ref{TREE-MC}).
We introduce a binary variable $z_e$ for every edge $e\in E_M$ in the graph, which is 1 if and only if the edge is contained in the computed cut.

The following model guarantees that every optimal solution corresponds to a constrained maximum cut in the graph $G_M$ that provides the optimal solution to the storyline problem. 
The constraints (\ref{CYC-MC}) capture the fact that any intersection of a cut and a cycle in a graph has even cardinality.
The correctness is provided in~\cite{GronemannJLM16}, see also \cite{DBLP:journals/dm/Simone90,DBLP:journals/informs/BuchheimWZ10}.
\begin{align*}
\qquad \max & \quad \sum_{e\in E_M}w_ez_e \tag{CUT}\label{cut-formulation}\\
 \sum_{e\in F}z_e-\sum_{e\in C\setminus F}z_e\le\vert F\vert -1  &\quad \mbox{for all cycles $C\subseteq E_M$}, F\subseteq C, \vert F \vert \textrm{ odd}  \label{CYC-MC}  \tag{CYC} \\   %
        0\le z_{(v^*,c_{uv}^i)}+z_{(v^*,c_{vw}^i)}-z_{(v^*,c_{uw}^i)}\le 1 &\quad \textrm{for all }  i=1,\ldots,\ell; c_{uv}^i,c_{vw}^i,c_{uw}^i\in V_i
        \tag{LOPC} \label{LOPC-Con}\\
        &\quad \textrm{with } c_u\prec_{\hat{\pi}_i}c_v\prec_{\hat{\pi}_i}c_w\\
   \begin{rcases*} z_{(v^*,c_{uw}^i)}=z_{(v^*,c_{vw}^i)}\text{ if }c_u,c_v\prec_{\hat{\pi}_i}c_w\\
       z_{(v^*,c_{wu}^i)}=z_{(v^*,c_{wv}^i)}\text{ if }c_u,c_v\succ_{\hat{\pi}_i}c_w
       \end{rcases*}&\quad\begin{aligned}
       &\textrm{for all }  i=1,\ldots,\ell; I\in {\cal I}(t_i);\\&c_u,c_v\in \charac(I); c_w\in \operatorname{AC}(t_i)\setminus \charac(I)
   \end{aligned}\tag{TRC}\label{TREE-MC} \\
 z_e\in\{0,1\} &\quad \mbox{for all $e\in E_M$}\tag{BIC}
\end{align*}

\section{Structural Properties of Storyline Solutions}\label{section:structural}
In this section, we identify structural properties of storyline solutions that will help us to optimize the models proposed in \cref{section:newformulation}, and that guide the exact optimization process. 

For the results in this section, we introduce some definitions and observe some properties of the function $\crossings$.
Let $\pi$ be a permutation of the set $X$, and $\phi$ be a permutation of the set $Y\subseteq X$.
We define the permutation $\psi=\assign(\pi,\phi)$ as the permutation of the set $X$ such that
\[\psi(i)=\begin{cases}\pi(i)\qquad\text{if }\pi(i)\in X\setminus Y\\
                        \phi(j)\qquad\text{if }c=\pi(i)\in Y\text{ with }j=\pi[Y]^{-1}(c)\end{cases}\]
In particular, all elements in $X\setminus Y$ are ordered according to $\pi$, and all elements in $Y$ switch their position so that they are ordered according to $\phi$.
Notice that $\assign(\pi,\phi)[Y]=\phi$.
\begin{example}
    Let $\pi=(c,e,b,d,f,a,g)$ and $\phi=(a,b,c)$. Then $\assign(\pi,\psi)=(a,e,b,d,f,c,g)$.
\end{example}
Consider three consecutive time steps $t_i,t_{i+1},t_{i+2}$ ($1\le i\le\ell-2)$, we observe a special type of triangle inequality.
\begin{observation}\label{obs:triangle}
    Let $C\subseteq \operatorname{AC}(t_i,t_{i+2})$. For any feasible solution $S$ we have
    \[\crossings(\pi_i[C],\pi_{i+1}[C])+\crossings(\pi_{i+1}[C],\pi_{i+2}[C])\ge \crossings(\pi_i[C],\pi_{i+2}[C]).\]
\end{observation}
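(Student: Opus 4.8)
The plan is to prove the inequality pairwise over unordered pairs of characters and then sum. The key observation is that, once restricted to a common ground set, $\crossings$ is exactly the number of inverted (disagreeing) pairs, i.e.\ the Kendall tau distance between the two linear orders. Since the hypothesis gives $C\subseteq \operatorname{AC}(t_i,t_{i+2})=\operatorname{AC}(t_i)\cap\operatorname{AC}(t_{i+1})\cap\operatorname{AC}(t_{i+2})$, all three restrictions $\pi_i[C]$, $\pi_{i+1}[C]$, $\pi_{i+2}[C]$ are genuine permutations of the \emph{same} set $C$, so each of the three terms in the statement is a well-defined count of inverted pairs within $C$ and the three counts are directly comparable pair by pair.

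First I would fix an arbitrary unordered pair $\{c_u,c_v\}\subseteq C$ and record its relative order in each of the three permutations via three Boolean indicators: let $a=1$ iff the order of $c_u,c_v$ differs between $\pi_i[C]$ and $\pi_{i+1}[C]$, let $b=1$ iff it differs between $\pi_{i+1}[C]$ and $\pi_{i+2}[C]$, and let $c=1$ iff it differs between $\pi_i[C]$ and $\pi_{i+2}[C]$. By the definition of inversions, summing each indicator over all pairs recovers the corresponding crossing number, so that $\crossings(\pi_i[C],\pi_{i+1}[C])=\sum a$, $\crossings(\pi_{i+1}[C],\pi_{i+2}[C])=\sum b$, and $\crossings(\pi_i[C],\pi_{i+2}[C])=\sum c$, where each sum ranges over all $\binom{|C|}{2}$ unordered pairs.

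The heart of the argument is the per-pair inequality $a+b\ge c$, which is just the triangle inequality for the discrete metric on the two possible relative orders of $c_u$ and $c_v$. If the order of $c_u,c_v$ agrees in $\pi_i[C]$ and $\pi_{i+2}[C]$, then $c=0$ and the inequality is immediate; if the order disagrees, then $c=1$, and since the order can only toggle between two values, it must flip at least once along the chain $\pi_i[C]\to\pi_{i+1}[C]\to\pi_{i+2}[C]$, forcing at least one of $a,b$ to equal $1$ and hence $a+b\ge 1=c$. Summing the inequality $a+b\ge c$ over all pairs then yields the claimed bound.

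I do not expect a genuine obstacle in this proof; it reduces to the elementary fact that the relative order of a fixed pair is a single bit that changes an odd number of times along the chain precisely when its endpoints disagree. The only point requiring a little care is the reduction to the common ground set $C$: one must verify that the three restricted permutations live on the same set so that the pairwise decomposition is valid, which is exactly what the assumption $C\subseteq\operatorname{AC}(t_i,t_{i+2})$ supplies.
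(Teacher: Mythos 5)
Your proof is correct and is exactly the elementary argument the paper has in mind (it states this as an unproved observation): the inversion count is the Kendall tau distance on permutations of the common set $C$, and the per-pair triangle inequality $a+b\ge c$ summed over all unordered pairs gives the claim. Your care in noting that $C\subseteq\operatorname{AC}(t_i,t_{i+2})=\operatorname{AC}(t_i)\cap\operatorname{AC}(t_{i+1})\cap\operatorname{AC}(t_{i+2})$ makes all three restrictions permutations of the same set is the right hypothesis to check.
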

Next, consider two permutations $\pi_i,\pi_{i+1}$ of a storyline solution such that $C=\operatorname{AC}(t_i)\cap \operatorname{AC}(t_{i+1})$. 
Consider two sets $X,Y\subseteq C$. We define $\crossings(\pi_i,\pi_{i+1},X,Y)$ as the amount of $(c,c')\in X\times Y$ such that either $c\prec_{\pi_i}c'$ and $c'\prec_{\pi_{i+1}}c$, or $c'\prec_{\pi_i}c$ and $c\prec_{\pi_{i+1}}c'$. I.e., these are the crossings between time steps $t_i$ and $t_{i+1}$ with one character in $X$ and one character in $Y$. We set $\crossings(\pi_i,\pi_{i+1},X)=\crossings(\pi_i,\pi_{i+1},X,X)$.
\begin{observation}\label{obs:inducedcrossings}
    It holds that $\crossings(\pi_i,\pi_{i+1},X)=\crossings(\pi_i[X],\pi_{i+1}[X])$.
\end{observation}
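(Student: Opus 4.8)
The plan is to show that the two sides of the claimed identity literally enumerate the same set of character pairs, so that the equality holds without any nontrivial bijection or calculation. I would unfold both definitions and check that each counts precisely the pairs $\{c,c'\}\subseteq X$ whose relative order is reversed from $\pi_i$ to $\pi_{i+1}$.

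Concretely, the first step is to recall that, by definition, $\crossings(\pi_i,\pi_{i+1},X)=\crossings(\pi_i,\pi_{i+1},X,X)$ counts the pairs $c,c'\in X$ satisfying the stated reversal condition, i.e., those whose order in $\pi_i$ is opposite to their order in $\pi_{i+1}$. The second step is to recall that $\crossings(\pi_i[X],\pi_{i+1}[X])$ is, by the definition of the crossing number of two permutations from the preliminaries, the number of inversions between $\pi_i[X]$ and $\pi_{i+1}[X]$, i.e., the pairs $x,x'\in X$ that are reversed between $\pi_i[X]$ and $\pi_{i+1}[X]$.

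The single fact linking the two is that restriction preserves relative order: directly from the definition of $\pi[X]$ in the preliminaries, for all $c,c'\in X$ we have $c\prec_{\pi_i}c'\iff c\prec_{\pi_i[X]}c'$, and likewise for $\pi_{i+1}$. Hence a pair $\{c,c'\}\subseteq X$ is reversed between $\pi_i$ and $\pi_{i+1}$ if and only if it is reversed between $\pi_i[X]$ and $\pi_{i+1}[X]$, so both quantities count exactly the same pairs and are therefore equal.

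The only point that needs care — and the closest thing to an obstacle, though it is merely bookkeeping — is reconciling the counting conventions: the defining expression for $\crossings(\pi_i,\pi_{i+1},X)$ ranges over $X\times Y$ with a reversal condition that is symmetric in $c$ and $c'$, whereas the inversion count fixes a canonical orientation via $\pi^{-1}(x)<\pi^{-1}(x')$. I would resolve this by passing to \emph{unordered} pairs on both sides, observing that each reversed pair of characters corresponds to exactly one crossing with both endpoints in $X$, so that each such crossing is counted exactly once on each side.
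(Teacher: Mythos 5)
Your proof is correct and matches the paper's treatment: the paper states this as an unproved observation, immediate from the fact that restriction preserves relative order, which is exactly the definitional unfolding you give. Your care about the ordered-versus-unordered pair convention is the right (and only) subtlety, and resolving it by counting unordered reversed pairs on both sides is the intended reading.
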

\begin{observation}\label{obs:crossingdecomposition}
Let $X,Y$ be sets such that $X\dot{\cup}Y=C\subseteq \operatorname{AC}(t_i,t_{i+1})$. It holds that
\[\crossings(\pi_i,\pi_{i+1}, C)=\crossings(\pi_i,\pi_{i+1},X)+\crossings(\pi_i,\pi_{i+1},Y)+\crossings(\pi_i,\pi_{i+1},X,Y).\]
\end{observation}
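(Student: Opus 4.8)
The plan is to reduce the identity to an elementary partition of the set of crossing pairs, so the first step is to make the notion of a crossing explicit as a property of \emph{unordered} pairs. For two distinct characters $c,c'\in C$, say that $\{c,c'\}$ \emph{crosses} between $t_i$ and $t_{i+1}$ if $c\prec_{\pi_i}c'$ and $c'\prec_{\pi_{i+1}}c$, or $c'\prec_{\pi_i}c$ and $c\prec_{\pi_{i+1}}c'$. This condition is symmetric in $c$ and $c'$, hence a well-defined property of the unordered pair. With this reading, each quantity in the statement simply counts crossing pairs whose two endpoints lie in prescribed parts: $\crossings(\pi_i,\pi_{i+1},C)$ counts crossing pairs contained in $C$; $\crossings(\pi_i,\pi_{i+1},X,Y)$ counts crossing pairs with one endpoint in $X$ and one in $Y$; and, invoking \cref{obs:inducedcrossings} together with the definition $\crossings(\pi_i,\pi_{i+1},X)=\crossings(\pi_i,\pi_{i+1},X,X)$, the terms $\crossings(\pi_i,\pi_{i+1},X)$ and $\crossings(\pi_i,\pi_{i+1},Y)$ count exactly the crossing pairs contained in $X$ and in $Y$, respectively (these equal the numbers of inversions of $\pi_i[X],\pi_{i+1}[X]$ and of $\pi_i[Y],\pi_{i+1}[Y]$).

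Once this set-counting reading is fixed, the identity is immediate. Since $X\,\dot{\cup}\,Y=C$, every unordered pair $\{c,c'\}$ of distinct characters of $C$ falls into exactly one of three mutually exclusive and exhaustive cases: both endpoints in $X$, both endpoints in $Y$, or one endpoint in each of $X$ and $Y$. Restricting this trichotomy to the pairs that actually cross partitions the crossing pairs of $C$ into the crossing pairs inside $X$, the crossing pairs inside $Y$, and the crossing pairs straddling $X$ and $Y$. Summing the cardinalities of the three blocks of this partition yields precisely the claimed equation, so the proof is a one-line consequence of additivity of cardinality over a partition.

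I expect no genuine obstacle; the only point requiring care is the bookkeeping of ordered versus unordered pairs in the definition of $\crossings(\pi_i,\pi_{i+1},X,Y)$. Because $X$ and $Y$ are disjoint, the ordered pairs $(c,c')\in X\times Y$ are in bijection with the straddling unordered pairs $\{c,c'\}$ (the reverse ordering lies in $Y\times X$), so the off-diagonal term counts each straddling crossing exactly once, with no double counting. This matches the convention under which \cref{obs:inducedcrossings} holds, namely that $\crossings(\pi_i,\pi_{i+1},X)$ counts each within-$X$ crossing once rather than twice. Under this consistent reading the three blocks tile the crossing pairs of $C$ without overlap, and the identity follows; I would simply flag this convention explicitly to forestall any apparent off-by-a-factor discrepancy.
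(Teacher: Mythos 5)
Your proof is correct; the paper states this as an unproved observation, and your partition of the crossing pairs of $C$ into those within $X$, within $Y$, and straddling $X$ and $Y$ is exactly the intended justification. Your explicit flagging of the ordered-versus-unordered pair convention (so that the diagonal terms count each crossing once, consistently with \cref{obs:inducedcrossings}, while disjointness of $X$ and $Y$ prevents double counting in the cross term) is a worthwhile clarification that the paper leaves implicit.
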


Now, we define two properties of storyline drawings. \cref{definition:consistent1} captures that the relative order of characters in an interaction can be propagated backwards.
\begin{definition}[Type-1 consistency]\label{definition:consistent1}
    Let $S=(\pi_1,\pi_2,\dots,\pi_\ell)$ be a solution to a storyline instance $(T,\mathcal{C},\mathcal{I}, A)$. Let $I\in \mathcal{I}$, $t_i=\tm(I)$ and $C=\charac(I)$. Let $1< j(I)\le i$ be the index of the earliest time step $t_{j(j)}$ such that $C\subseteq \operatorname{AC}(t_{j(I)},t_i)$ and
    \[\forall k\in \{j(I)+1,\dots, i\}:\operatorname{CI}(t_k)\cap C=\emptyset\lor \exists I\in\mathcal{I}(t_k):C\subseteq \charac(I).\]
     We say that $S$ is $I$-consistent if
     \[\forall k\in \{j(I),j(I)+1,\dots,i\}:\pi_k[C]=\pi_i[C].\]
     Further, we say that $S$ is \emph{type-1-consistent} if it
     is $I$-consistent for all $I\in \cal{I}$.
\end{definition}

\cref{definition:consistent2} defines the property that between suitable pairs of interactions with the same set of characters, these characters are kept together between the two time steps. Note that this is not implied by type-1 consistency.
\begin{definition}[Type-2 consistency]\label{definition:consistent2}
    Let $S=(\pi_1,\pi_2,\dots,\pi_\ell)$ be a solution to a storyline instance $(T,\mathcal{C},\mathcal{I}, A)$. Consider two interactions $I_1,I_2\in \mathcal{I}$ such that 
    \begin{itemize}
        \item $\charac(I_1)=\charac(I_2)=C$,
        \item $i=\tm(I_1)<\tm(I_2)=j$, and
        \item $\forall k\in\mathbb{N}: i<k<j\Rightarrow[\operatorname{CI}(t_k)\cap C=\emptyset\lor \exists I_3\in \mathcal{I}(t_k):C\subseteq \charac(I_3)]$.
    \end{itemize}
    We say that $S$ is $(I_1,I_2)$-consistent if
    \[\forall i<k<j:\exists \pi^a,\pi^b:\pi_k=\pi^a\star \pi_i[C]\star \pi^b.\]
    Further, we say that $S$ is type-2-consistent if it is $(I_1,I_2)$-consistent for all such pairs $(I_1,I_2)$.
\end{definition}

The following lemma shows that we can achieve type-1 consistency for storyline drawings without increasing the number of crossings. Essentially, if a storyline solution is not type-1 consistent for an interaction $I$, we can propagate the relative order of characters in that interaction forward from the time step $t_{j(I)}$ from \cref{definition:consistent1}.
\begin{restatable}{lemma}{typeonelemma}
    \label{lemma:type1}
    Let $(T,\mathcal{C},\mathcal{I}, A)$ be an instance with a solution $S$. We can construct from $S$ a type-1-consistent solution $S'$ such that $\crossings(S')\le \crossings(S)$. If $S$ is type-2-consistent, so is $S'$.
\end{restatable}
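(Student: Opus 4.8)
The plan is to repair $I$-consistency one interaction at a time and then iterate. Fix an interaction $I$ for which $S$ is not $I$-consistent, and write $t_i=\tm(I)$, $C=\charac(I)$, with $j(I)$ as in \cref{definition:consistent1}. Following the intuition stated before the lemma, I would \emph{copy the order of $C$ at $t_{j(I)}$ forward}: define $S'$ by setting $\pi_k'=\assign(\pi_k,\pi_{j(I)}[C])$ for every $k\in\{j(I)+1,\dots,i\}$ and leaving all other permutations (in particular $\pi_{j(I)}$ and $\pi_{i+1}$) unchanged. By construction $\pi_k'[C]=\pi_{j(I)}[C]=\pi_i'[C]$ for all $k\in\{j(I),\dots,i\}$, so $S'$ is $I$-consistent. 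Feasibility is immediate: $\assign$ leaves every character outside $C$ in place and only permutes $C$ within the slots it already occupies, so the defining property of $j(I)$ guarantees that on each modified layer $t_k$ either no character of $C$ participates in an interaction (all interaction blocks are untouched) or all of $C$ sits inside one interaction block (which is then only permuted internally). Crucially, $t_{j(I)}$ itself is \emph{not} modified, which is why copying forward rather than backward is the correct direction: it never disturbs a possible ``splitting'' interaction at $t_{j(I)}$.

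For the crossing count I would apply \cref{obs:crossingdecomposition} on each affected layer-pair, splitting crossings into those within $C$, those within the complement $\bar C$, and those between $C$ and $\bar C$. Crossings within $\bar C$ are unchanged, since $\assign$ fixes $\bar C$ pointwise. For a pair $(t_k,t_{k+1})$ strictly inside the window ($j(I)\le k\le i-1$) the $C$-order is the \emph{constant} permutation $\pi_{j(I)}[C]$, so there are no within-$C$ crossings, and for each $d\in\bar C$ the number of $C$--$d$ crossings in $S'$ is exactly $|b_k^d-b_{k+1}^d|$, where $b_k^d$ is the number of $C$-slots before $d$ at $t_k$; as $b_k^d$ depends only on slot positions it is unchanged by $\assign$, while in $S$ any pair needs at least $|b_k^d-b_{k+1}^d|$ crossings to change this count. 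Hence on every internal pair the crossings do not increase, and in fact leave a surplus of at least the original within-$C$ crossings $\crossings(\pi_k[C],\pi_{k+1}[C])$.

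The only pair where the count might rise is the right boundary $(t_i,t_{i+1})$, and this is the heart of the argument. I bound the increase by the triangle inequality \cref{obs:triangle}: $\crossings(\pi_i',\pi_{i+1})\le\crossings(\pi_i',\pi_i)+\crossings(\pi_i,\pi_{i+1})$, so the increase is at most $\crossings(\pi_i',\pi_i)$. Because $C$ is \emph{consecutive} at $t_i$ (it is the interaction $I$), reordering $C$ inside that block creates no crossings with $\bar C$, whence $\crossings(\pi_i',\pi_i)=\crossings(\pi_i[C],\pi_{j(I)}[C])$. Applying \cref{obs:triangle} repeatedly along $t_{j(I)},\dots,t_i$ gives $\crossings(\pi_{j(I)}[C],\pi_i[C])\le\sum_{k=j(I)}^{i-1}\crossings(\pi_k[C],\pi_{k+1}[C])$, so the boundary increase is dominated by the within-$C$ surplus accumulated over the internal pairs, and summing everything yields $\crossings(S')\le\crossings(S)$. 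I would handle characters that start or end inside the window by restricting every statement to the commonly active characters; this only deletes terms and preserves the inequalities.

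Finally, to obtain a fully type-1-consistent drawing I would apply this move repeatedly, always to some currently violated interaction, and argue termination via a potential such as $\sum_{I}\sum_{k=j(I)}^{\tm(I)}\operatorname{inv}(\pi_k[\charac(I)],\pi_{\tm(I)}[\charac(I)])$. I expect the genuine difficulty to lie here and in the type-2 claim: a single move permutes $C=\charac(I)$ within its slots on the window of $I$, which could a priori raise the consistency potential of \emph{other} interactions and disturb the fixed-order blocks required by \cref{definition:consistent2}. The plan is to show that copying forward never breaks a type-2 block whose character set is disjoint from, or contained in, $C$ (using that $\assign$ preserves the relative order of every subset of $C$), and to process the interactions in a suitable order (e.g.\ by time step) so that already-repaired interactions are not reopened. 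Verifying this non-interference, rather than the single-interaction estimate, is the main obstacle.
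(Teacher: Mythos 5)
Your proposal is correct and follows essentially the same route as the paper's proof: the identical forward-copying construction $\pi_k'=\assign(\pi_k,\pi_{j(I)}[C])$ on the window $\{j(I)+1,\dots,i\}$, the same three-way decomposition of crossings into within-$C$, within-complement, and mixed pairs, the same slot-counting/pigeonhole bound $|b_k^d-b_{k+1}^d|$ for the mixed terms, and the same telescoped triangle inequality absorbing the boundary increase into the within-$C$ surplus. The two issues you flag as the main obstacles are resolved in the paper exactly as you anticipate: interactions are processed latest-first so that already-consistent later interactions are never reopened (a short case analysis on whether $\charac(I')\subseteq\charac(I)$), and a type-2 block that meets the modified window must have its character set contained in $C$, whence $\assign$ preserves both its internal order and its position.
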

\begin{proof}
    Let $S=(\pi_1,\pi_2,\dots,\pi_\ell)$ be a solution such that there exists $I\in\mathcal{I}$ such that $S$ is not $I$-consistent.
    We choose such an interaction $I$ that is as ``late as possible'', i.e.\ $I\in \mathcal{I}(t_i)$ with $i$ maximal.
    We construct $S'=(\pi_1',\pi_2',\dots,\pi_\ell')$ such that
    \begin{enumerate}[(1)]
        \item $\crossings(S')\le \crossings(S)$, 
        \item $S'$ is $I$-consistent,
        \item for each $I'\in\mathcal{I}$ such that $S$ is $I'$-consistent and $\tm(I')\ge \tm(I)$, $S'$ is also $I'$-consistent.
    \end{enumerate}
    Let $t_i=\tm(I)$ and $C=\charac(I)$ and define $j(I)<i$ as in \cref{definition:consistent1}.
    For all $1\le k\le \ell$ such that $k\le j(I)$ or $k>i$ we set $\pi_k'=\pi_k$.
    For $k\in \{j(I)+1,j(I)+2,\dots,i\}$ we set $\pi_k'=\assign(\pi_k,\pi_{j(I)}[C])$.
    It is clear that the new solution $S'$ is $I$-consistent and is still a solution to the storyline instance (characters involved in interactions are still consecutive in the respective time steps).
    
    We first argue that $\crossings(S')\le \crossings(S)$.
    First, it is clear that $\crossings(\pi'_k,\pi'_{k+1})=\crossings(\pi_k,\pi_{k+1})$ for $k\le j(I)-1$ and $k>i$ as the involved permutations did not change.
    It remains to show that %
    \begin{equation*}
        \sum_{k=j(I)}^i\crossings(\pi_k',\pi_{k+1}')\le \sum_{k=j(I)}^i\crossings(\pi_k,\pi_{k+1}).
    \end{equation*}
    We also have that 
    \begin{align*}
        \sum_{k=j(I)}^i\crossings(\pi_k',\pi_{k+1}')=&\sum_{k=j(I)}^i\crossings(\pi_k',\pi_{k+1}',C)+\\
            &\sum_{k=j(I)}^i\crossings(\pi_k',\pi_{k+1}',\operatorname{AC}(t_k,t_{k+1})\setminus C)+\\
            &\sum_{k=j(I)}^i\crossings(\pi_k',\pi_{k+1}',C,\operatorname{AC}(t_k,t_{k+1})\setminus C),
    \end{align*}
    and 
    \begin{align*}
        \sum_{k=j(I)}^i\crossings(\pi_k,\pi_{k+1})=&\sum_{k=j(I)}^i\crossings(\pi_k,\pi_{k+1},C)+\\
            &\sum_{k=j(I)}^i\crossings(\pi_k,\pi_{k+1},\operatorname{AC}(t_k,t_{k+1})\setminus C)+\\
            &\sum_{k=j(I)}^i\crossings(\pi_k,\pi_{k+1},C,\operatorname{AC}(t_k,t_{k+1})\setminus C).
    \end{align*}
    Thus it is enough to show that the inequality holds for the respective terms separately.
    \begin{itemize}
        \item $\sum_{k=j(I)}^i\crossings(\pi_k',\pi_{k+1}',C)\le \sum_{k=j(I)}^i\crossings(\pi_k,\pi_{k+1},C)$: We have
        \begin{align*}
            \sum_{k=j(I)}^i\crossings(\pi_k,\pi_{k+1},C)&=\sum_{k=j(I)}^i\crossings(\pi_k[C],\pi_{k+1}[C])\\
            &\ge \sum_{k=j(I)}^i\crossings(\pi_k[C\cap \operatorname{AC}(t_{i+1})],\pi_{k+1}[C\cap \operatorname{AC}(t_{i+1})])\\
            &\ge \crossings(\pi_{j(I)}[C\cap \operatorname{AC}(t_{i+1})], \pi_{i+1}[C\cap \operatorname{AC}(t_{i+1})])\\
            &=\crossings(\pi'_{i}[C\cap \operatorname{AC}(t_{i+1})], \pi_{i+1}'[C\cap \operatorname{AC}(t_{i+1})])\\
            &=\sum_{k=j(I)}^i\crossings(\pi_k',\pi_{k+1}',C)
        \end{align*}
        The first inequality holds because $C\cap \operatorname{AC}(t_{i+1})$ is a subset of $C$. The second inequality holds because of \cref{obs:triangle}. The last equality holds because the only crossings between characters from $C$ from $t_{j(I)}$ to $t_{i+1}$ appear between $t_{i}$ and $t_{i+1}$ as the relative order of characters from $C$ in $S'$ is the same from $j(I)$ to $i$.
        
        \item $\sum_{k=j(I)}^i\crossings(\pi_k',\pi_{k+1}',\operatorname{AC}(t_k,t_{k+1})\setminus C)\le \sum_{k=j(I)}^i\crossings(\pi_k,\pi_{k+1},\operatorname{AC}(t_k,t_{k+1})\setminus C)$: Let $k\in \{j(I), j(I)+1,\dots,i\}$ and let $C_a=\operatorname{AC}(t_k,t_{k+1})\setminus C$. We have that
        \begin{align*}
            \crossings(\pi_k',\pi_{k+1}',\operatorname{AC}(t_k,t_{k+1})\setminus C)&=\crossings(\pi_k'[C_a],\pi_{k+1}'[C_a])\\
            &=\crossings(\pi_k[C_a],\pi_{k+1}[C_a])\\
            &=\crossings(\pi_k,\pi_{k+1},\operatorname{AC}(t_k,t_{k+1})\setminus C).
        \end{align*}
        The key is that $\pi_k[C_a]=\pi_k'[C_a]$ and $\pi_{k+1}[C_a]=\pi_{k+1}'[C_a]$, as we did not change the relative order of non-interaction characters, that is, characters that are not in $C$.
        As the equality holds for each term, it also holds for the sum.
        \item $\sum_{k=j(I)}^i\crossings(\pi_k',\pi_{k+1}',C, \operatorname{AC}(t_k,t_{k+1})\setminus C)\le \sum_{k=j(I)}^i\crossings(\pi_k,\pi_{k+1},C, \operatorname{AC}(t_k,t_{k+1})\setminus C)$: We again argue for each $k\in \{j(I),j(I)+1,\dots,i\}$ separately and consider two cases.
        
        If $k=i$ then the relative order of character pairs $(c,c')\in C\times (\operatorname{AC}(t_k,t_{k+1})\setminus C)$ is the same for $\pi_i$ and $\pi_i'$ as $C$ must be consecutive in $t_i$, and $\pi[\operatorname{AC}(t_k,t_{k+1})\setminus C] = \pi'[\operatorname{AC}(t_k,t_{k+1})\setminus C]$. The relative order is also the same in $\pi_{i+1}'$ and $\pi_{i+1}$ as $\pi_{i+1}'=\pi_{i+1}$. Thus $\crossings(\pi_k,\pi_{k+1},C,\operatorname{AC}(t_k,t_{k+1)})\setminus C)=\crossings(\pi_k',\pi_{k+1}',C,\operatorname{AC}(t_k,t_{k+1})\setminus C)$ holds.

        In the remaining case, we have that $k\in \{j(I),j(I)+1,\dots,i-1\}$. Consider a character $c\in \operatorname{AC}(t_k,t_{k+1})\setminus C$. Let $\alpha$ be the amount of characters from $C$ that are crossed by $c$ between $t_k$ and $t_{k+1}$ in $S$. We show that $c$ crosses at most $\alpha$ characters from $C$ between $t_k$ and $t_{k+1}$ in $S'$, which is enough to show the claim.
        Let $\operatorname{abv}_k$ be the amount of $c'\in C$ such that $c'\prec_{\pi_k}c$. Define $\operatorname{abv}_{k+1}$ by replacing $k$ with $k+1$ in the definition. Equivalently define $\operatorname{abv}_k'$ and $\operatorname{abv}_{k+1}'$ by replacing $\pi_k$ by $\pi_k'$ and $\pi_{k+1}$ by $\pi'_{k+1}$, respectively.
        By the definition of $\assign$ we have that $\operatorname{abv}_k=\operatorname{abv}_k'$ and $\operatorname{abv}_{k+1}=\operatorname{abv}_{k+1}'$. By the pigeonhole principle, $c$ must cross at least $\beta=|\operatorname{abv}_k-\operatorname{abv}_{k+1}|$ characters from $C$ between $t_k$ and $t_{k+1}$ (for $S$ and $S'$). Notice that $\beta$ is the exact amount of crossings for $S'$ and $\beta\le \alpha$ must hold as $\beta$ is a lower bound. Thus $c$ crosses less than or equal characters from $C$ between $t_k$ and $t_{k+1}$ in $S'$ when compared to $S$.
    \end{itemize}
    Putting these inequalities together, we get $\crossings(S')\le \crossings(S)$. It is easy to see that we constructed $S'$ such that it satisfies (2).

    It remains to show (3), i.e. that for each $I'\in\mathcal{I}$ such that $S$ is $I'$-consistent and $\tm(I')\ge \tm(I)$, $S'$ is also $I'$-consistent. We show the claim by contradiction, so assume to the contrary that there exists some $I'$ with $\tm(I')\ge \tm(I)$, $S$ is $I'$-consistent, and $S'$ is not. First, it is worth noting that we only change relative orders of character pairs involving at least one character from $C$. 
    We consider different cases.
    \begin{enumerate}
        \item $\tm(I')=\tm(I)$: Then $\charac(I)\cap \charac(I')=\emptyset$. Hence, we did not change relative orders of character pairs from $\charac(I')$ and we obtain a contradiction.
        \item $\tm(I')>\tm(I)=t_i$. We again consider two cases.
        \begin{enumerate}
            \item $\charac(I')\subseteq \charac(I)$: It follows that $j(I')<i$ and further $j(I')\le j(I)$. As we changed orders for character pairs from $\charac(I')$, we have that $S$ is already not $I'$-consistent, a contradiction.
            \item $\charac(I')\not\subseteq\charac(I)$: Then $j(I')\ge i$. But we only changed orders up to $i$, so $S'$ is already not $I'$-consistent, a contradiction.
        \end{enumerate}
    \end{enumerate}
    Lastly, assume that $S$ is type-2-consistent, we prove that $S'$ is as well. We proceed by contradiction.
    Hence, let there be $(I_1,I_2)$ as in \cref{definition:consistent2} such that $S$ is $(I_1,I_2)$-consistent, but $S'$ is not. It follows that the interval $[t_{j(I)+1},t_i]$ has non-empty intersection with $[\tm(I_1),\tm(I_2)]$. It further follows that $\charac(I_1)\subseteq C$. But if $S$ was already $(I_1,I_2)$-consistent, we did not change relative orders nor positions of characters in $\charac(I_1)$ during the above process, a contradiction.
    
    The proof is concluded by applying the above procedure inductively.
\end{proof}

A similar result with a related proof argument holds for type-2 consistency.
\begin{restatable}{lemma}{typetwolemma}
    \label{lemma:type2}
    Let $(T,\mathcal{C},\mathcal{I}, A)$ be an instance with a solution $S$. We can construct from $S$ a type-2-consistent solution $S'$ such that $\crossings(S')\le \crossings(S)$. If $S$ is type-1-consistent, so is $S'$.
\end{restatable}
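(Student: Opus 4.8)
The plan is to mirror the structure of the proof of \cref{lemma:type1}: pick a single violating pair, repair it by a local construction that provably does not increase the crossing number, preserve the already-satisfied consistencies, and finish by induction. Concretely, suppose $S=(\pi_1,\dots,\pi_\ell)$ is not $(I_1,I_2)$-consistent for some suitable pair as in \cref{definition:consistent2}, and -- analogously to the choice of the latest violating interaction in \cref{lemma:type1} -- choose such a pair that is extremal, e.g.\ innermost with $j-\tm(I_1)$ minimal, where $i=\tm(I_1)$, $j=\tm(I_2)$, and $C=\charac(I_1)=\charac(I_2)$. I leave $\pi_k$ untouched for $k\le i$ and $k\ge j$, and for every intermediate step $i<k<j$ I replace $\pi_k$ by $\pi_k'$ obtained by deleting the characters of $C$ from $\pi_k$ and re-inserting them as one consecutive block ordered according to $\pi_i[C]$; the vertical position of that block is the key design choice (explained below), and the relative order of all characters outside $C$ is kept exactly as in $\pi_k$. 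Note that every $c\in C$ is active on all of $[t_i,t_j]$ because activity intervals are contiguous and $c$ is active at both $t_i$ and $t_j$. Feasibility of $S'$ follows from the third condition of \cref{definition:consistent2}: at each intermediate step either no character of $C$ lies in an interaction, in which case the block can be inserted into the gap without splitting a foreign interaction (no $C$-character can sit strictly inside a foreign interaction block, since such a block is consecutive), or $C$ is contained in a single interaction $I_3$ and we merely reorder inside the consecutive block $\charac(I_3)$.

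To bound $\crossings(S')$ I split, exactly as in \cref{lemma:type1} and using \cref{obs:inducedcrossings,obs:crossingdecomposition}, the crossings on each affected transition $t_k\to t_{k+1}$ ($i\le k\le j-1$) into crossings inside $C$, crossings inside $\operatorname{AC}(t_k,t_{k+1})\setminus C$, and crossings between $C$ and $\operatorname{AC}(t_k,t_{k+1})\setminus C$. For the inside-$C$ part, $S'$ keeps the order $\pi_i[C]$ on all intermediate steps, so the only inside-$C$ crossings occur on the last transition $t_{j-1}\to t_j$ and amount to $\crossings(\pi_i[C],\pi_j[C])$; iterating the triangle inequality of \cref{obs:triangle} along $t_i,\dots,t_j$ (valid since $C$ is active throughout) shows this is at most the inside-$C$ crossings of $S$. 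For the part inside $\operatorname{AC}(t_k,t_{k+1})\setminus C$, the construction preserves the relative order of all non-$C$ characters, hence these crossings are identical in $S$ and $S'$.

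The remaining between part is the crux, and forcing $C$ to be consecutive is genuinely restrictive here: an external character can no longer lie in the interior of $C$, so whenever it changes sides of the block it must cross all $|C|$ members at once. The trick I would use is to anchor the block at a single, cleverly chosen representative $r\in C$: at every intermediate step I insert the block into the gap that $r$ occupies after the $C$-characters are deleted from $\pi_k$. Since $C$ is already consecutive at $t_i$ and $t_j$, this rule makes an external character $c$ lie above the block exactly when $c\prec_{\pi_k}r$, for every $k\in\{i,\dots,j\}$. Consequently the between-crossings of $S'$ on the affected transitions equal $|C|$ times the number of crossings of $r$ with external characters in $S$. Writing $\crossings_{c'}^{\mathrm{ext}}$ for the number of crossings of a fixed $c'\in C$ with external characters in $S$ over $t_i,\dots,t_j$, the total between-crossings of $S$ are exactly $\sum_{c'\in C}\crossings_{c'}^{\mathrm{ext}}$, so choosing $r$ to minimize $\crossings_{c'}^{\mathrm{ext}}$ yields $|C|\cdot\crossings_{r}^{\mathrm{ext}}\le\sum_{c'\in C}\crossings_{c'}^{\mathrm{ext}}$ by averaging. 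Hence the between-crossings do not increase either, and combining the three parts gives $\crossings(S')\le\crossings(S)$.

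Finally I would argue termination and the preservation statements exactly as in \cref{lemma:type1}. The construction only alters the order within $C$ and the position of $C$ relative to external characters, never the mutual order of external characters; together with the extremal choice of $(I_1,I_2)$ this shows that no previously satisfied $(I_1',I_2')$-consistency is destroyed and that $S'$ is $(I_1,I_2)$-consistent, so a well-founded measure (e.g.\ the number of violated suitable pairs) strictly decreases and induction applies. The claim that type-1 consistency is inherited is the mirror image of the corresponding paragraph in the proof of \cref{lemma:type1} and follows from the same case analysis on the relationship between $\charac(I')$ and $C$ for each interaction $I'$. I expect the main obstacle to be exactly the between-crossings bound: identifying the anchor-at-the-cheapest-representative construction and verifying that it realizes a valid consecutive block at every step, rather than trying to keep each external character on a fixed side, which is in general not realizable while also preserving the external order.
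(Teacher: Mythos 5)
Your proposal is essentially the paper's own proof: the same construction (delete $C$ from each intermediate layer, re-insert it as a block ordered by $\pi_i[C]$ at the gap of a representative $c^*\in C$ chosen to minimize its crossings with external characters), the same three-way decomposition of crossings with the triangle inequality for the inside-$C$ part and the averaging/argmin argument for the $C$-versus-external part. The only deviation is that you pick an innermost violating pair where the paper picks one with $j-i$ maximal, which affects only the induction bookkeeping, and your write-up of the feasibility and averaging steps is if anything more explicit than the paper's.
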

\begin{proof}
    Let $S$ be a solution that is not type-2-consistent, i.e.\ there exist interactions $I_1,I_2\in\mathcal{I}$ such that $S$ is not $(I_1,I_2)$-consistent with $i=\tm(I_1),j=\tm(I_2)$ and $C=\charac(I)$. Choose such a pair such that $j-i$ is maximized.
    We construct $S'=(\pi_1',\pi_2',\dots,\pi_\ell')$ such that
    \begin{enumerate}[(1)]
        \item $\crossings(S')\le \crossings(S)$,
        \item $S'$ is $(I_1,I_2)$-consistent, and
        \item for each $I_1',I_2'\in \mathcal{I}$ such that $S$ is $(I_1',I_2')$-consistent, $S'$ is also $(I_1',I_2')$-consistent,
    \end{enumerate}
    as follows: For $k\in \{1,2,\dots,i\}\cup \{j+1,j+2,\dots,\ell\}$ we set $\pi_k'=\pi_k$.
    Now we find a character from $C$ that has the fewest crossings with characters not in $C$ between $t_i$ and $t_j$. Formally, we define $c^*\in C$ such that
    \[c^*=\operatorname*{argmin}_{c\in C}\sum_{k=i}^{j-1}\crossings(\pi_k,\pi_{k+1},\{c\},\operatorname{AC}(t_k,t_{k+1}\setminus C)).\]
    For each $k\in \{i+1,i+2,\dots,j\}$ we do the following. Let $\pi^a,\pi^b$ be such that $\pi_k=\pi^a\star (c^*)\star \pi^b$, where $(c^*)$ is the unit permutation of the set $\{c^*\}$. We set $\pi'_k=\pi^a[\operatorname{AC}(t_k)\setminus C]\star \pi_i[C]\star \pi^b[\operatorname{AC}(t_k)\setminus C]$.
    Informally, we first remove $C$ from $\pi_k$ and then insert the permutation $\pi_i[C]$ at the position of $c^*$.
    We first argue that $\crossings(S')\le \crossings(S)$.
    First, it is clear that $\crossings(\pi'_k,\pi'_{k+1})=\crossings(\pi_k,\pi_{k+1})$ for $k\le i-1$ and $k>j$ as the involved permutations did not change.
    It remains to show that 
    \begin{equation}
        \sum_{k=i}^j\crossings(\pi_k',\pi_{k+1}')\le \sum_{k=i}^j\crossings(\pi_k,\pi_{k+1}).   
    \end{equation}
    We split up the crossings into types, as in the proof of \cref{lemma:type1}; i.e.\ we consider crossings between pairs $C\times C,\operatorname{AC}(t_k,t_{k+1})\times C$, and $\operatorname{AC}(t_k,t_{k+1})\times \operatorname{AC}(t_k,t_{k+1})$, respectively.
    \begin{itemize}
        \item  $\sum_{k=i}^j\crossings(\pi_k',\pi_{k+1}',C)\le \sum_{k=i}^j\crossings(\pi_k,\pi_{k+1},C)$: We have
        \begin{align*}
            \sum_{k=i}^j\crossings(\pi_k,\pi_{k+1},C)&=\sum_{k=i}^j\crossings(\pi_k[C],\pi_{k+1}[C])\\
            &\ge \sum_{k=i}^j\crossings(\pi_k[C\cap \operatorname{AC}(t_{j+1})],\pi_{k+1}[C\cap \operatorname{AC}(t_{j+1})])\\
            &\ge \crossings(\pi_{i}[C\cap \operatorname{AC}(t_{j+1})], \pi_{j+1}[C\cap \operatorname{AC}(t_{j+1})])\\
            &=\crossings(\pi'_{j}[C\cap \operatorname{AC}(t_{j+1})], \pi_{j+1}'[C\cap \operatorname{AC}(t_{j+1})])\\
            &=\sum_{k=i}^j\crossings(\pi_k',\pi_{k+1}',C)
        \end{align*}
        The first inequality holds because $C\cap \operatorname{AC}(t_{i+1})$ is a subset of $C$. The second inequality holds because of \cref{obs:triangle}. The last equality holds because the only crossings between characters from $C$ from $t_{i}$ to $t_{j+1}$ appear between $t_{j}$ and $t_{j+1}$ as the relative order of characters from $C$ in $S'$ is the same from $i$ to $j$.
        \item $\sum_{k=i}^j\crossings(\pi_k',\pi_{k+1}',\operatorname{AC}(t_k,t_{k+1})\setminus C)\le \sum_{k=i}^j\crossings(\pi_k,\pi_{k+1},\operatorname{AC}(t_k,t_{k+1})\setminus C)$: Let $k\in \{i,i+1,\dots, j\}$ and let $C_{\text{ac}}=\operatorname{AC}(t_k,t_{k+1})\setminus C$. We have that
        \begin{align*}
            \crossings(\pi_k',\pi_{k+1}',\operatorname{AC}(t_k,t_{k+1})\setminus C)&=\crossings(\pi_k'[C_{\text{ac}}],\pi_{k+1}'[C_{\text{ac}}])\\
            &=\crossings(\pi_k[C_{\text{ac}}],\pi_{k+1}[C_{\text{ac}}])\\
            &=\crossings(\pi_k,\pi_{k+1},\operatorname{AC}(t_k,t_{k+1})\setminus C).
        \end{align*}
        The key is that $\pi_k[C_{\text{ac}}]=\pi_k'[C_{\text{ac}}]$ and $\pi_{k+1}[C_{\text{ac}}]=\pi_{k+1}'[C_{\text{ac}}]$, as we did not change the relative order of non-interaction characters, that is, characters that are not in $C$.
        As the equality holds for each term, it also holds for the sum.
        \item $\sum_{k=i}^j\crossings(\pi_k',\pi_{k+1}',C, \operatorname{AC}(t_k,t_{k+1})\setminus C)\le \sum_{k=i}^j\crossings(\pi_k,\pi_{k+1},C, \operatorname{AC}(t_k,t_{k+1})\setminus C)$: This inequality holds by definition of $c^*$. Each character from $C$ is involved in the same crossings between $t_i$ and $t_j$.
        The crossings between $t_j$ and $t_{j+1}$ are exactly the same, as relative orders of character pairs in $C\times (\operatorname{AC}(t_{k},t_{k+1})\setminus C)$ are the same in $S$ and $S'$ for $k=j,j+1$.
    \end{itemize}
\end{proof}%
The following is a direct consequence.
\begin{corollary}\label{corollary:crossingminimaltype12}
    For each storyline instance $(T,\mathcal{C},\mathcal{I}, A)$ there exists a crossing-minimum solution $S$ that is type-1-consistent and type-2-consistent.
\end{corollary}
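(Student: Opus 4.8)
The plan is to obtain the desired solution by applying the two preceding lemmas in sequence, starting from any crossing-minimum solution. Concretely, I would first fix a crossing-minimum solution $S^*$, whose existence is guaranteed since there are only finitely many feasible storyline drawings for a given instance. The goal is to massage $S^*$ into a solution that is simultaneously type-1- and type-2-consistent without increasing its crossing number; since $S^*$ is already optimal, any such construction automatically yields another crossing-minimum solution.

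The core of the argument is to chain \cref{lemma:type1} and \cref{lemma:type2} in the right order so that their preservation clauses cooperate. First I would apply \cref{lemma:type1} to $S^*$ to get a type-1-consistent solution $S_1$ with $\crossings(S_1)\le\crossings(S^*)$. At this stage $S_1$ need not be type-2-consistent, but this does not matter yet. Next I would apply \cref{lemma:type2} to $S_1$, obtaining a type-2-consistent solution $S_2$ with $\crossings(S_2)\le\crossings(S_1)$. The key point is the second sentence of \cref{lemma:type2}: because $S_1$ is type-1-consistent, so is $S_2$. Hence $S_2$ is both type-1- and type-2-consistent.

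Finally I would settle the crossing count: combining the two inequalities gives $\crossings(S_2)\le\crossings(S_1)\le\crossings(S^*)$, and by minimality of $S^*$ equality holds throughout, so $S_2$ is crossing-minimum. This $S_2$ is the required solution.

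Since both lemmas are already established, I do not anticipate any genuine obstacle — the only thing to be careful about is the ordering: the lemma applied \emph{second} (here \cref{lemma:type2}) must be the one whose preservation clause protects the property established first (type-1 consistency). Applying the two lemmas in the opposite order would work equally well, using instead the preservation clause of \cref{lemma:type1}.
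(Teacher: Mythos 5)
Your proposal is correct and matches the paper's intended argument: the paper presents the corollary as a direct consequence of \cref{lemma:type1} and \cref{lemma:type2}, obtained exactly by chaining the two lemmas so that the preservation clause of the second protects the consistency established by the first, while optimality of the starting solution forces equality in the crossing counts. No gaps.
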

\cref{theorem:consistent3} is the main ingredient for a new ILP formulation given in \cref{section:newformulation}.
It shows that we can in specific cases assume the order of characters $C_a$ above and $C_b$ below an interaction at time step $t_i$ to be equal to the relative order at $t_{i-1}$. This is similar to type-1-consistency, where the relative order of characters in an interaction sometimes can be kept.%
\begin{restatable}{theorem}{propthm}
    \label{theorem:consistent3}
    Let $(T,\mathcal{C},\mathcal{I}, A)$ be a storyline instance. There exists a crossing-minimum solution $S=(\pi_1,\pi_2,\dots,\pi_\ell)$ with the following property.
    For all $t_i\in \{t_2,t_3,\dots,t_\ell\}$ with $|\mathcal{I}(t_i)|=1$, where $\mathcal{I}(t_i)=\{I\}$, the following holds.
    \begin{enumerate}[(1)]
        \item $\exists C_a,C_b:\pi_i=\pi_i[C_a]\star \pi_i[\charac(I)]\star \pi_i[C_b]$,
        \item if $C_a\subseteq \operatorname{AC}(t_{i-1},t_i)$, then $\pi_i[C_a]=\pi_{i-1}[C_a]$,
        \item if $\charac(I)\subseteq \operatorname{AC}(t_{i-1},t_i)$, then $\pi_i[\charac(I)]=\pi_{i-1}[\charac(I)]$, and
        \item if $C_b\subseteq \operatorname{AC}(t_{i-1},t_i)$, then $\pi_i[C_b]=\pi_{i-1}[C_b]$.
    \end{enumerate}
\end{restatable}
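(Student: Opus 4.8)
The plan is to start from an arbitrary crossing-minimum solution $S=(\pi_1,\dots,\pi_\ell)$ and to rewrite it, processing the time steps $t_2,t_3,\dots,t_\ell$ strictly from left to right, into a crossing-minimum solution that satisfies (1)--(4), never increasing the crossing number along the way. Property~(1) needs no work: whenever $|\mathcal{I}(t_i)|=1$ with $\mathcal{I}(t_i)=\{I\}$, the interaction block $\charac(I)$ is forced to be consecutive, so the characters of $\operatorname{AC}(t_i)$ split automatically into the block $C_a$ above and the block $C_b$ below $\charac(I)$, giving $\pi_i=\pi_i[C_a]\star\pi_i[\charac(I)]\star\pi_i[C_b]$. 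The sets $C_a,\charac(I),C_b$ are determined by the position of the interaction block, and all modifications below only permute characters \emph{within} a block, so these sets stay fixed. When processing $t_i$, I would, for each block $B\in\{C_a,\charac(I),C_b\}$ with $B\subseteq\operatorname{AC}(t_{i-1})$, replace $\pi_i$ by $\assign(\pi_i,\pi_{i-1}[B])$, i.e.\ reorder the characters of $B$ so their relative order at $t_i$ becomes exactly $\pi_{i-1}[B]$. This directly enforces (2), (3), and (4); blocks not fully active at $t_{i-1}$ are left untouched, so the corresponding implication holds vacuously. Since the three blocks are disjoint and keep their positions, these reorderings are independent and applied simultaneously.

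The key step is that this modification does not increase the crossing number. Only $\pi_i$ changes, so only the two gaps $(t_{i-1},t_i)$ and $(t_i,t_{i+1})$ are affected, and by (an iterated version of) \cref{obs:crossingdecomposition} the crossings in each gap split into contributions within each block and pairwise cross-block contributions. For a pair of characters in two different blocks, their relative order at $t_i$ is fixed by the unchanged block order, so every cross-block contribution in both gaps is unchanged; the same holds for the within-block contribution of any block that was \emph{not} reordered, since $\pi_i$ restricted to it is unchanged. For a reordered block $B$ (so $B\subseteq\operatorname{AC}(t_{i-1})$), write $\hat B=B\cap\operatorname{AC}(t_{i+1})$. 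The reordering turns the within-$B$ crossings in gap $(t_{i-1},t_i)$ into $0$ and those in gap $(t_i,t_{i+1})$ into $\crossings(\pi_{i-1}[\hat B],\pi_{i+1}[\hat B])$, because $\assign(\pi_i,\pi_{i-1}[B])[\hat B]=\pi_{i-1}[\hat B]$. Using $\hat B\subseteq B$ followed by the triangle inequality of \cref{obs:triangle} applied to $\hat B\subseteq\operatorname{AC}(t_{i-1},t_{i+1})$,
\[
\begin{aligned}
&\crossings(\pi_{i-1}[B],\pi_i[B])+\crossings(\pi_i[\hat B],\pi_{i+1}[\hat B])\\
&\qquad\ge \crossings(\pi_{i-1}[\hat B],\pi_i[\hat B])+\crossings(\pi_i[\hat B],\pi_{i+1}[\hat B])
\ge \crossings(\pi_{i-1}[\hat B],\pi_{i+1}[\hat B]),
\end{aligned}
\]
so the old within-$B$ crossings summed over the two gaps dominate the new ones (for $i=\ell$ the gap $(t_i,t_{i+1})$ is absent and the claim is immediate). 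Summing over the three blocks and the unchanged cross-block terms shows the total does not increase; as $S$ was crossing-minimum, it remains crossing-minimum.

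Finally, processing $t_i$ modifies only $\pi_i$. Every property claimed at an already-processed step $t_{i'}$ with $i'<i$ involves only $\pi_{i'-1}$ and $\pi_{i'}$, whose indices are below $i$, so it is preserved; and once $t_i$ is processed, $\pi_i$ is never touched again, so the equalities $\pi_i[B]=\pi_{i-1}[B]$ just established persist (note $\pi_{i-1}$ was already finalized when $t_i$ is processed). Iterating over all single-interaction time steps therefore yields the desired crossing-minimum solution. The main obstacle is the within-block estimate: a block $B$ need not remain fully active at $t_{i+1}$, so the triangle inequality cannot be applied to $B$ itself but only to $\hat B$, which is exactly why the restriction-to-subset step must precede it; the second delicate point, namely ruling out circular interference between the per-step modifications, is precisely what the left-to-right order together with the invariant ``modify only $\pi_i$'' resolves.
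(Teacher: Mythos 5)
Your proof is correct, but it takes a genuinely different route from the paper. The paper's proof is a reduction: starting from an optimal $S^*$, it builds an \emph{augmented instance} $(T,\mathcal{C},\mathcal{I}\cup\mathcal{I}_{S^*},A)$ by adding, at each single-interaction time step $t_i$, dummy interactions $I_a$ and $I_b$ with character sets $C_a$ and $C_b$ (whenever these are contained in $\operatorname{AC}(t_{i-1},t_i)$); since $S^*$ remains feasible and optimal for the augmented instance, applying \cref{lemma:type1} to it yields a type-1-consistent solution, and type-1 consistency of $I_a$, $I_b$, and $I$ delivers properties (2), (4), and (3) respectively. You instead give a self-contained left-to-right rewriting argument: at each relevant $t_i$ you set $\pi_i[B]:=\pi_{i-1}[B]$ for each qualifying block $B$ via $\assign$, and verify directly---by decomposing crossings in the two affected gaps into within-block and cross-block contributions (\cref{obs:crossingdecomposition}) and applying restriction-to-subset followed by the triangle inequality (\cref{obs:triangle}) to $\hat B=B\cap\operatorname{AC}(t_{i+1})$---that the total crossing count does not increase. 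This is essentially a one-step-backward special case of the machinery inside the paper's proof of \cref{lemma:type1}, re-derived from scratch. What your approach buys is independence from \cref{lemma:type1} and from the somewhat delicate definition of $j(I)$ in \cref{definition:consistent1}; what the paper's approach buys is brevity, since all the crossing accounting (including the non-interference of successive modifications, which you handle by the ``modify only $\pi_i$, process left to right'' invariant) is inherited from the already-proved lemma. Your handling of the two delicate points---restricting to $\hat B$ before invoking the triangle inequality, and finalizing $\pi_{i-1}$ before $\pi_i$ is processed---is sound.
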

\begin{proof}
    Consider a crossing-minimum solution $S^*=(\pi_1^*,\pi_2^*,\dots,\pi_\ell^*)$. We construct a new storyline instance $(T,\mathcal{C},\mathcal{I}', A)$ where $\mathcal{I}'=\mathcal{I}\cup \mathcal{I}_{S^*}$, with $\mathcal{I}_{S^*}$ containing the following interactions.
    For each $t_i\in \{t_2,t_3,\dots,t_\ell\}$ with $|\mathcal{I}(t_i)|=1$ and $\mathcal{I}=\{I\}$, let $C_a,C_b\in \operatorname{AC}(t_i)$ such that $\pi_i^*=\pi_i[C_a]*\pi^*_i[\charac(I)]\star \pi^*_i[C_b]$.
    If $C_a\subseteq \operatorname{AC}(t_{i-1},t_i)$, then add to $\mathcal{I}_{S^*}$ the interaction $I_a$ with $\charac(I_a)=C_a$ and $\tm(I_a)=t_i$.
    If $C_b\subseteq \operatorname{AC}(t_{i-1},t_i)$, then add to $\mathcal{I}_{S^*}$ the interaction $I_b$ with $\charac(I_b)=C_b$ and $\tm(I_b)=t_i$.
    Note that $S^*$ is still a crossing-minimum solution for the new instance. Hence, we apply \cref{lemma:type1} to $S^*$ for the new storyline instance. We obtain a new solution $S$ that is crossing-minimum and type-1-consistent. Type-1-consistency for new interactions implies (2) and (4), type-1 consistence of the original interactions implies (3). Lastly, $S$ is also a crossing-minimum solution of the original instance as $\crossings(S)\le \crossings(S^*)$ by \cref{lemma:type1}, and the statement follows.
\end{proof}

\section{Refining the ILP models}\label{section:newformulation}
We apply our structural insights from \cref{section:structural} to the models (besides the \eqref{cut-formulation}-model) to obtain a new ILP formulation, including a reduction of the number of (LOP) constraints in \cref{ssec:plo} via \cref{theorem:consistent3} and the inclusion of additional symmetry breaking constraints in \cref{section:removesymmetries} via \cref{corollary:crossingminimaltype12}.

\subsection{The Propagated Linear Ordering Model (PLO)}\label{ssec:plo}

For our new formulation, we take the linearized model~\eqref{LIN-ILP} as basis, but remove some of the \eqref{LOP-Con}-constraints for time step $t_i$ as we can make use of propagating the ordering at $t_{i-1}$ by \cref{theorem:consistent3} as follows. 
If $\mathcal{I}(t_i)$ for $i>1$ contains only one interaction $I$, and no characters outside the interaction start at $t_i$ (i.e., $\operatorname{AC}(t_i)\setminus \operatorname{AC}(t_{i-1})\subseteq \operatorname{CI}(t_i)$), we only include a part of the \eqref{LOP-Con}-constraints for time step $t_i$ using a representative character $c_w\in\charac(I)$: %

From the set of \eqref{LOP-Con}-constraints containing at least one character in $\operatorname{AC}(t_i)\setminus \charac(I)$, we keep only those that contain exactly two characters in $\operatorname{AC}(t_i)\setminus \charac(I)$ and the representative character $c_w\in\charac(I)$.
This is sufficient, because we can define the order of the active characters in $t_i$ relative to the order of the characters in the interaction $I$ based on \cref{theorem:consistent3}. Hence, let $c_w$ be a representative character from the set $\charac(I)$, and consider a pair of characters $c_u,c_v\in \operatorname{AC}(t_i)\setminus \charac(I)$. 
By \cref{theorem:consistent3}, if both $c_u$ and $c_v$ are above or below $c_w$, then their relative order can be fixed by their relative order at $t_{i-1}$. Otherwise, their relative order is already given by their relative order to $c_w$. That is, if, e.g., $c_u$ is above $c_w$ and $c_v$ is below $c_w$, then we know that $c_u$ is above $c_v$. To ensure the above, we add the following constraints in addition to the %
\eqref{LOP-Con}-constraints for $c_u$, $c_v$, and $c_w$ at time step $t_i$.
    \begin{align}
        x_{i,u,v}&\ge x_{i-1,u,v}+x_{i,u,w}+x_{i,v,w}-2\tag{PROP-R1} \label{PROP-con1}\\
        x_{i,u,v}&\ge x_{i-1,u,v}+x_{i,w,u}+x_{i,w,v}-2\tag{PROP-R2} \label{PROP-con2} %
\end{align}%
    The two constraints ensure that $c_u$ is above $c_v$ if the requirements are met. By switching $c_u$ and $c_v$, these constraints also ensure the case that $c_u$ is below $c_v$. 

If additionally $\charac(I)\subseteq \operatorname{AC}(t_{i-1})$ we can apply \cref{theorem:consistent3} (3) to further reduce the number of those \eqref{LOP-Con}-constraints, whose triples are taken from the set $\charac(I)$:
In this case, we do not add any of the \eqref{LOP-Con}-constraints for the characters in $I$, but instead for each pair $c_u,c_v\in \charac(I)$, we add the following constraint ensuring that the relative order of $c_u$ and $c_v$ is the same for $t_i$ and $t_{i-1}$.
    \begin{align}
        x_{i,u,v}=x_{i-1,u,v}\tag{PROP-I} \label{PROP-conI}
    \end{align}
 If both reductions for \eqref{LOP-Con} apply, we get a quadratic rather than cubic number of constraints for $t_i$.
We call this formulation \emph{propagated linear order} (PLO). Note that this idea of reducing the number of \eqref{LOP-Con}-constraints also works for any of the other standard ILP models. 

\begin{restatable}{theorem}{plotransitivity}\label{thm:plotransitivity}
    Every optimal solution to the formulation (PLO) corresponds to a crossing minimum storyline drawing. %
\end{restatable}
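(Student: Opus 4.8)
The plan is to show that the optimum value of the reduced variant (PLO) of \eqref{LIN-ILP} equals the minimum number of crossings and that every feasible point of (PLO) still encodes a genuine storyline drawing; combining both facts yields the claim. Concretely, I would establish the two inequalities $\mathrm{opt}(\text{PLO})\le\mathrm{opt}(\text{Storyline})$ and $\mathrm{opt}(\text{PLO})\ge\mathrm{opt}(\text{Storyline})$ separately, where the second inequality together with a feasibility argument identifies optimal (PLO) solutions as crossing-minimum drawings.

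For the bound $\mathrm{opt}(\text{PLO})\le\mathrm{opt}(\text{Storyline})$ I would start from a crossing-minimum solution $S$ whose existence with the extra structure is guaranteed by \cref{theorem:consistent3}, and encode it through the ordering variables $x_{i,u,v}$ (setting the $y$-variables to the induced crossing indicators). All constraints inherited from \eqref{LIN-ILP} (namely \eqref{EQ-Con}, the retained \eqref{LOP-Con}, \eqref{TREE-Con}, \eqref{CR-Con}, and \eqref{BIN-Con}) hold because $S$ is a valid drawing. The only constraints needing verification are the propagation constraints: for \eqref{PROP-con1} and \eqref{PROP-con2} the right-hand side is nontrivial only when $c_u,c_v$ lie on the same side of the representative $c_w$, i.e.\ both in $C_a$ or both in $C_b$, in which case properties (2) and (4) of \cref{theorem:consistent3} give $\pi_i[C_a]=\pi_{i-1}[C_a]$ and $\pi_i[C_b]=\pi_{i-1}[C_b]$, so $x_{i,u,v}=x_{i-1,u,v}$ and the inequalities hold with equality; similarly \eqref{PROP-conI} holds by property (3). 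Since $S$ has $\mathrm{opt}(\text{Storyline})$ crossings, this exhibits a feasible (PLO) point of that value.

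The converse inequality is the delicate part and is where I expect the main obstacle to lie. Here I must argue that \emph{every} feasible $x$ of (PLO) defines a total order at each time step, despite having discarded most \eqref{LOP-Con} constraints at the reduced time steps; once this is established, \eqref{CR-Con} together with minimization forces the objective to count exactly the crossings of the resulting drawing, so the value is at least $\mathrm{opt}(\text{Storyline})$. I would prove transitivity by induction on the time index $i$: at $t_1$ and at all non-reduced steps the full \eqref{LOP-Con} constraints are present, so the order is total; at a reduced step $t_i$ with single interaction $I$ and representative $c_w$ I would rule out $3$-cycles by a case distinction on how a triple meets $\charac(I)$. Triples with two or more characters in $\charac(I)$ are handled by \eqref{TREE-Con} (all interaction characters have identical relative position to any outside character), triples meeting $\charac(I)$ in the representative are exactly the retained \eqref{LOP-Con} constraints, and a triple meeting $\charac(I)$ only in a non-representative character reduces to the representative case again via \eqref{TREE-Con}. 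The crucial remaining case is a triple of three outside characters (all active at $t_{i-1}$ by the reduction's precondition that no outside character starts at $t_i$): if all three lie on the same side of $c_w$, then \eqref{PROP-con1} and \eqref{PROP-con2}, combined with \eqref{EQ-Con}, pin their three pairwise orders to those at $t_{i-1}$, which are transitive by the induction hypothesis; and if the triple is split by $c_w$, the retained \eqref{LOP-Con} constraints involving $c_w$ force each cross-side pair to agree with its relation to $c_w$, again excluding a cycle.

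Finally, combining the two inequalities gives $\mathrm{opt}(\text{PLO})=\mathrm{opt}(\text{Storyline})$, and since the feasibility argument shows that any optimal (PLO) solution is a bona fide storyline drawing whose crossing number equals its objective value, that drawing must be crossing-minimum, which is the assertion of the theorem.
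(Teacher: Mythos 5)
Your proposal is correct and follows essentially the same route as the paper: the feasibility/optimality direction is dispatched via \cref{theorem:consistent3}, and the heart of the argument is an induction over time steps showing that the retained \eqref{LOP-Con}, \eqref{TREE-Con}, \eqref{PROP-con1}, \eqref{PROP-con2}, and \eqref{PROP-conI} constraints still force a total order at each reduced layer, with the same case distinction on how a triple meets $\charac(I)$. Your write-up is in fact somewhat more explicit than the paper's (spelling out both inequalities and the same-side/split-by-$c_w$ subcases), but no new idea is involved.
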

\begin{proof}
    We only have to show that solutions to PLO satisfy transitivity constraints for the ordering variables, as we include the constraint set \eqref{CR-Con} and \eqref{LIN-ILP}. Optimality follows from \cref{theorem:consistent3}.
    We show this by induction on $i$ -- the time steps. For the base case of $i=1$ the variables $x_{1,u,v}$ for $c_u,c_v\in \operatorname{AC}(t_1)$ satisfy transitivity, as we include all \eqref{LOP-Con} constraints for $t_1$.
    For the induction step, assume that transitivity is satisfied for all time steps $t_{i'}$ with $i'<i$. We show that transitivity is satisfied for $t_i$. We perform a case distinction.
    \begin{itemize}
        \item If $|\mathcal{I}(t_i)|>1$ or $(\operatorname{AC}(t_i)\setminus \operatorname{AC}(t_{i-1}))\not\subseteq \operatorname{CI}(t_i)$, transitivity is clearly satisfied, as we again include all \eqref{LOP-Con}-constraints for this time step. 
        \item Otherwise, there is exactly one interaction $I$ at time $t_i$. Consider now three distinct characters $c_u,c_v,c_w\in \operatorname{AC}(t_i)$. We show transitivity between this triple for $t_i$ by considering memberships with respect to $I$.
        \begin{itemize}
            \item If $\{c_u,c_v,c_w\}$ contains at least one character from $\charac(I)$ but $\{c_u,c_u,c_w\}\not\subseteq \charac(I)$ then transitivity for these three characters is achieved because of the \eqref{TREE-Con}-constraints.
            \item If $\{c_u,c_v,c_w\}\cap \charac(I)=\emptyset$, then we know that $\{c_u,c_v,c_w\}\subseteq \operatorname{AC}(t_{i-1})$. Thus, transitivity between this triple follows from \eqref{PROP-con1} and \eqref{PROP-con2} and because, by the induction hypothesis, the same triple satisfies transitivity for time step $t_{i-1}$.
            \item If $\{c_u,c_v,c_w\}\subseteq \charac(I)$, we again have two cases. If $\charac(I)\not\subseteq \operatorname{AC}(t_{i-1})$, transitivity is clearly satisfied as we add all \eqref{LOP-Con}-constraints between triples in $\charac(I)$ for this time step. Otherwise, transitivity for this triple follows from the induction hypothesis and \eqref{PROP-conI}.\qedhere
        \end{itemize}
    \end{itemize}
\end{proof}

\subsection{Symmetry Breaking Constraints}\label{section:removesymmetries}
We introduce the set (SBC) of \emph{symmetry breaking constraints} that are based on \cref{corollary:crossingminimaltype12} and might improve the solving process of the models, as they constitute equalities:
\begin{itemize}
    \item We can assume that a crossing-minimum solution is type-1 consistent. Thus let $I\in\mathcal{I}$ with $t_i=\tm(I)$ and let $j(I)$ be defined as in \cref{definition:consistent1}.  For all pairs $c_u,c_v\in \charac(I)$ and all $j(I)\le k<i$ we can add the following constraint enforcing type-1 consistency:
    \begin{align}
        x_{k,u,v}=x_{i,u,v}\tag{SBC-1}\label{symcons1}
    \end{align}
    \item We can assume that a crossing-minimum solution is type-2 consistent. Thus, let $I_1,I_2\in\mathcal{I}$ be two distinct interactions satisfying the properties of \cref{definition:consistent2}. Let $i=\tm(I_1)$ and $j=\tm(I_2)$.
    For all $i<k<j$, all pairs $c_u,c_v\in \charac(I_1)$, and all $c_w\in \operatorname{AC}(t_k)\setminus \charac(I_1)$ %
    we add the following constraint, enforcing type-2 consistency:
    \begin{align}
        x_{k,u,w}=x_{k,v,w}\tag{SBC-2}\label{symcons2}
    \end{align}
\end{itemize}

\section{Implementation}\label{section:implementation}
In this section, we discuss relevant implementation details and new heuristic-based approaches to improve our algorithms.
\subsection{Initial Heuristic}\label{section:initial}

Our initial heuristic follows a decomposition strategy. The original problem instance is first split into smaller subproblems, each comprising $\ensuremath{\hat{\ell}}$ many consecutive time steps, which are of such a size that it is possible to quickly compute a crossing minimal solution for this subproblem.
The initial heuristic solution $S_h$ is computed by optimally solving these subsequent ``slices'' of the original problem instance and piecing together a global solution. This works as follows.
First, a crossing minimal solution $S_1 = (\pi_1, \dots, \pi_{\hat{\ell}})$ is computed for the first $\ensuremath{\hat{\ell}}$ many time steps of the original problem using the \hyperref[section:newformulation]{PLO} ILP formulation. The first $\ensuremath{\hat{s}}$ many layers of $S_1$ are assigned to the initial heuristic solution $S_h$. In the next iteration, we compute an optimal solution for time steps $\ensuremath{\hat{s}},\dots,\ensuremath{\hat{s}}+\ensuremath{\hat{\ell}}-1$ while enforcing the ordering of layer $\hat{s}$ by fixing the ordering variables accordingly. Again, we fix the first $\ensuremath{\hat{s}}$ many permutations of this solution as the permutations of the corresponding time steps in the heuristic solution $S_h$. This is continued until the heuristic solution is computed for all time steps $t_1,t_2,\dots,t_\ell$.

We choose $\ensuremath{\hat{s}} < \ensuremath{\hat{\ell}}$ in order to include as much information about the global problem into the sliced subproblem as possible (in the form of the time steps that are about to come), while achieving a good tradeoff between time needed to solve all subproblems and the number of crossings of $S_h$. Experimental evaluation showed that $\ensuremath{\hat{s}} = 5, \ensuremath{\hat{\ell}} = 30$ yields a good tradeoff between runtime and solution quality.

\subsection{Rounding and Local Improvement Heuristics}\label{section:roundingimp}
We propose a rounding heuristic that exploits fractional LP-solutions. %
Furthermore, we try to improve these solutions as well as incumbent solutions found by the solver software by proposing three local improvement heuristics {\textcolor{lipicsGray}{\bfseries\sffamily Rem-DC}, {\textcolor{lipicsGray}{\bfseries\sffamily Push-CR}, and {\textcolor{lipicsGray}{\bfseries\sffamily SL-Bary}.

\subparagraph{Rounding Heuristic.} We propose a strategy to round fractional solutions of the ordering variables $x_{i,u,v}$ to valid integer solutions corresponding to a drawing of the storyline instance.  
In particular, we compute permutations $\pi_i$ for $t_i$, going from $i=1$ to $\ell$ in this order, and convert them to integer solutions of ordering variables in the natural way. This works as follows. First, for each interaction $I\in \mathcal{I}(t)$ we compute a permutation $\pi_I$ of its characters $\charac(I)$. For this we compute for each $c\in \charac(I)$ the value $\ensuremath{d^-}(c)$ which is computed as $A+B$ where $A$ is the number of $c'\in\charac(I)$ such that $x_{i,c',c}>0.5+\epsilon$ and $B$ is the number of $c'$ such that $|x_{i,c,c'}-0.5|\le \epsilon$, $\{c,c'\}\subseteq \operatorname{AC}(t_{i-1})$, and $c'\prec_{\pi_{i-1}} c$\footnote{This is useful as setting all ordering variables to $0.5$ is a valid solution to the LP relaxation of most considered models and thus ordering variables often assume this value}. Clearly, $B$ is only positive if $i>1$. Then $\pi_I$ is computed by sorting $\charac(I)$ by their $\ensuremath{d^-}$-values. If the model contains symmetry breaking constraints, this sometimes leads to an infeasible ordering. In this case, we find an order $\pi_I$ that also satisfies the symmetry breaking constraints imposed by $\pi_{i-1}$ (some pairs of characters must have the same relative order in $\pi_i$ and $\pi_{i-1}$) as follows. We construct $\pi_{I}$ iteratively by always selecting the character $c\in \charac(I)$ that has the smallest value $\ensuremath{d^-}(c)$ and, furthermore, there is no character $c'$ which was not selected yet and needs to precede $c$ according to $\pi_{i-1}$ and the symmetry constraints. This is implemented in quasi-linear time using a priority queue.

\subparagraph{Local Improvement Heuristics.}
\begin{description}
    \item[Rem-DC (remove double crossings)] This heuristic finds pairs of characters that cross twice, and both crossings can be removed without increasing the total number of crossings. Formally, this is possible for a drawing $S$ and two characters $c,c'$ if there exist $1\le i<j\le \ell$ with $j-i>1$ such that
\begin{itemize}
    \item $c$ and $c'$ cross between $t_i$ and $t_{i+1}$, and $t_{j-1}$ and $t_{j}$, and
    \item for all $k\in \mathbb{N}$ with $i<k<j$, $c$ and $c'$ either belong to the same interaction in $t_k$, or they both are in no interaction for $t_k$.
\end{itemize}
Then for all $k$ as above we can exchange $c$ and $c'$ in $\pi_k$. This removes the double-crossing between $c$ and $c'$ and further does not introduce new crossings.
    \item[Push-CR] This heuristic proceeds from $i=2,\dots,\ell$ in this order and tries to push crossings between $\pi_{i-1}$ and $\pi_i$ forward by one time step: Let $C$ be a maximal set of characters such that (1) all characters in $C$ appear consecutively in $\pi_i$, (2) $C\subseteq \operatorname{AC}(t_{i-1},t_i)$, and (3) all characters in $C$ either appear in the same interaction in $t_i$ or no character in $C$ is part of an interaction. For each such set of characters $C$ we replace in $\pi_i$, $\pi_i[C]$ by $\pi_{i-1}[C]$. By similar arguments as in \cref{section:structural} this never increases the number of crossings.
    \item[Bary-SL] Lastly, we describe a variant of the barycenter heuristic~\cite{DBLP:journals/tsmc/SugiyamaTT81} for storylines that iteratively improves a storyline drawing by updating $\pi_i$ for $1\le i\le \ell$ based on $\pi_{i-1}$ and $\pi_{i+1}$ or one of them if not both exist. It is only applied to $\pi_i$ if $|\mathcal{I}(t_i)|=1$. Informally, we say that a pair of characters $c,c'$ is comparable if $c$ and $c'$ have the same relative order in $\pi_{i-1}$ and $\pi_{i+1}$. We compute an ordering $\pi_i$ such that most comparable pairs have the same relative order in $\pi_{i-1},\pi_i,\pi_{i+1}$ as follows. We compute the directed auxiliary graph $G_C$ whose vertex set is a subset $S$ of $\operatorname{AC}(t_i)$ and which contains an arc from $c$ to $c'$ for each comparable pair $c$ and $c'$ such that $c$ is before $c'$ in $\pi_{i-1}$ and $\pi_{i+1}$. Then, an order of $S$ is built by iteratively selecting the vertex from $G_C$ with the fewest incoming arcs. We also ensure that characters $c$ that are not part of $I$ are above or below the characters in $I$ depending on which option leads to fewer crossings between $c$ and $\charac(I)$ with respect to the considered time steps $t_{i-1},t_i,t_{i+1}$. The algorithm computing the order based on the graph $G_C$ is then applied to the characters in the interaction yielding $\pi_I$, and those not in the interaction yielding $\pi_C$, respectively. The ordering $\pi_C$ is inserted into the maximum position of $\pi_I$ such that all characters before $\pi_C$ ``prefer'' being above the interaction with regard to crossings with $\charac(I)$.
The new $\pi_i$ is only accepted if it decreases the number of crossings.
\end{description}

Both {\textcolor{lipicsGray}{\bfseries\sffamily Bary-SL}} and {\textcolor{lipicsGray}{\bfseries\sffamily Push-CR}} are applied successively to layers $2,\dots,\ell$. This is repeated five times and applied to valid integer solutions found by the solver and the rounding heuristic described above. If enabled, the rounding heuristic is applied to every LP solution found by the solver. {\textcolor{lipicsGray}{\bfseries\sffamily Rem-DC}} is applied five times to each pair of characters.

\subsection{Max-Cut Implementation Details}\label{section:maxcutimplementation}
Since the original implementation of Gronemann et al.~\cite{GronemannJLM16} is not available, we provide our own implementation that was optimized beyond their algorithm. %
After reading the input, we first find an initial starting solution by applying adapted barycenter techniques as described in~\cite{GronemannJLM16}. %
We start the root relaxation with the objective function and the tree constraints (\ref{TREE-MC}) as the only constraints, %
and start separating the odd cycle (\ref{CYC-MC}) (as suggested by Charfreitag et al.~\cite{CJMM2022}) and the (\ref{LOPC-Con}) constraints. 
The (\ref{LOPC-Con}) constraints are separated by complete enumeration.
Whenever a new LP solution is available, all nonbinding inequalities are eliminated, and we try to exploit the information in the (fractional) solution in order to obtain a better %
incumbent solution. 

The root phase ends when no violated inequalities are found. Then the branch-and-cut phase is started by changing the variable types from continuous in the interval $[0,1]$ to binary. %
In the Gurobi ``MIPSOL'' callbacks at branch-and-cut nodes with an integer solution, we check if the integral solution is the characteristic vector of a storyline drawing. If so and if the number of crossings is lower than the one of the incumbent solution, the latter is updated, otherwise, the exact (\ref{CYC-MC}) and (\ref{LOPC-Con}) separators are called to provide violated inequalities that are passed to Gurobi as \emph{lazy constraints}.
In the Gurobi ``MIPNODE'' callbacks it is tried to exploit the fractional solution for a possible update of the incumbent solution.

\subsection{Implementation of the ILP Models}\label{section:ilpimplementation}
The models \eqref{QDR-ILP:c}, \eqref{LIN-ILP}, and (PLO) include many  symmetries regarding ordering variables and crossing variables. 
For each $t_i\in T$ and pair of characters $c_u,c_v\in \operatorname{AC}(t_i)$ we only keep the ordering variables $x_{i,u,v}$ and crossing variables $y_{i,u,v}$ with $u<v$. The constraints are adjusted with standard projections~\cite{GroetschelJR1985,GronemannJLM16}. 

We take the linearized model \eqref{LIN-ILP} as basis for our refined ILP model described in Section~\ref{section:newformulation}, because preliminary experiments showed no performance gain from refining \eqref{QDR-ILP:c} instead. Further, the linearized model \eqref{LIN-ILP} is competitive with the max-cut approach when implemented in Gurobi. Therefore, we decided on refining the linearized model that is simpler to implement and more accessible when compared with the max-cut approach.  %
Furthermore, implementing any of the ILP models naively includes up to $\mathcal{O}(\ell n^3)$ \eqref{LOP-Con}-constraints in the model. We have experimented with adding these constraints during a cutting-plane approach and also by including them into the Gurobi solver as \emph{lazy constraints}, i.e., constraints that the solver can decide to include at later stages during the solving process. 
We decided to always add \eqref{LOP-Con} as lazy constraints, as this leads to the best performance.
Hence, we consider the following algorithms for our experimental evaluation.
\begin{itemize}
    \item\mc: the max-cut formulation (as a baseline) implemented as described in \cref{section:maxcutimplementation}
    \item\lin: the linearized model \eqref{LIN-ILP} with \eqref{LOP-Con}-constraints included as lazy constraints
    \item\qdr: the quadratic model \eqref{QDR-ILP:c} with \eqref{LOP-Con}-constraints included as lazy constraints
    \item\plo: the PLO formulation with \eqref{LOP-Con}-constraints included as lazy constraints
\end{itemize}
The latter three algorithms are by default extended with the symmetry breaking constraints described in \cref{section:removesymmetries} (\sbc), the initial heuristic from \cref{section:initial} (\init), and the rounding and local improvement heuristics from \cref{section:roundingimp} (\rnd). This is not done for \mc, as it should serve as a state-of-the-art baseline and allow comparison with Gronemann et al.~\cite{GronemannJLM16}.

\section{Experiments and Evaluation}\label{sec:experiments}

In our experimental evaluation, we are interested in the following research questions.
\begin{description}
    \item[Q1:] Does the algorithm \plo\ based on our new ILP model dominate the state-of-the-art model \mc ? Will we be able to solve hard instances that have not been solved to optimality before? How do the various algorithms compare to each other?
    \item[Q2:] What effect do the structural insights have when applied to the \lin-formulation? 
    \item[Q3:] What is the effect of the newly introduced components \sbc, \init, and \rnd?
\end{description}

In the following we describe our experimental setup, our benchmark instances, and the results of our study. We also provide our results and analysis on  \href{https://doi.org/10.17605/OSF.IO/3BUA2}{\texttt{https://osf.io/3bua2/}}.

\subsection{Setup}
Systems employed for all experiments have AMD EPYC 7402, 2.80GHz 24-core CPUs and 1024GB of RAM, running Ubuntu 18.04.6 LTS; experiments were run using a single thread. 

\mc\ is implemented in \texttt{C} and compiled with \texttt{gcc 7.5.0}, \texttt{GNU make 4.1}, and flag \texttt{-O3}, all remaining code is written in \texttt{C++17}, compiled with \texttt{cmake 3.10.2} and \texttt{g++ 11.4.0} in \texttt{Release} mode. To solve the ILPs we used Gurobi 11.0.1. The time limit is 3600s (same as Gronemann et al.~\cite{GronemannJLM16}) and the memory limit is 16GB for all experiments. We do not know the memory limit for Gronemann, however memory was certainly not our limiting factor.
The time for the initial heuristic is negligible ($<1$\% of the overall runtime), so it is not counted towards the solving time.

To mitigate performance variability, we ran each instance-setting combination with five different seeds provided to Gurobi; data displayed below corresponds to the seed with the median runtime. With this, an instance counts as ``solved in the time limit'', if the majority of the five runs does not time out. 
Source code for the new formulations is available on  \href{https://doi.org/10.17605/OSF.IO/3BUA2}{\texttt{https://osf.io/3bua2/}}.

\subsection{Test Data}
The instances used in our computational study are taken from the literature~\cite{di2020storyline, GronemannJLM16,MovieDataset2013,DoblerNSVW23}. However, we also present a new data set, including existing instances, in a specifically designed storyline data format, together with tools for transformation and visualization of the storyline layouts on  \href{https://doi.org/10.17605/OSF.IO/3BUA2}{\texttt{https://osf.io/3bua2/}}. We also provide data on best known crossing numbers. %

The existing instances from Gronemann et al.~\cite{GronemannJLM16} consist of three book instances from the Stanford GraphBase database~\cite{knuth1993stanford}, i.e.,\ \textit{Anna Karenina} (anna), \textit{Les Misérables} (jean) and \textit{Adventures of Huckleberry Finn} (huck), and the movie instances TheMatrix, Inception, and StarWars. The instances gdea10, gdea20 from Dobler et al.~\cite{DoblerNSVW23} consist of publication data from 10 (resp.~20) authors from the GD conference. The publication instances ubiq1, ubiq2 are from Di Giacomo et al.~\cite{di2020storyline}. Furthermore, anna and jean are split up into slices of 1-4 consecutive chapters as was done by Gronemann et al.~\cite{GronemannJLM16}\footnote{We could not replicate this process fully equivalently, as sometimes our optimal crossing numbers are different to those of Gronemann et al.~\cite{GronemannJLM16}.}.
The new instances consist of scenes from nine blockbuster movies, namely Avatar, Back to The Future, Barbie, Forrest Gump, Harry Potter 1, Jurassic Park, Oceans 11, Oppenheimer, and Titanic. 

In all 59 resulting instances, characters are active from their first interaction to their last interaction, and most instances have one interaction per time step.

\subsection{Evaluation}

Several instances could be solved within 30s by all algorithms, others could not be solved within the time limit by any of the algorithms. 
The three instances TheMatrix, Inception, and StarWars used commonly in heuristic storyline visualization were all solved within 450ms.
We exclude all these instances and focus on the 23 challenging instances that remain. Out of these, the maximum number of characters is 88, and the maximum number of layers is 234. A table showing detailed statistics of test data and executions of each algorithm is given in \cref{apx:table}.

\subparagraph{Answering \qone\ and \qtwo.}
\begin{figure}
    \centering
    \includegraphics{./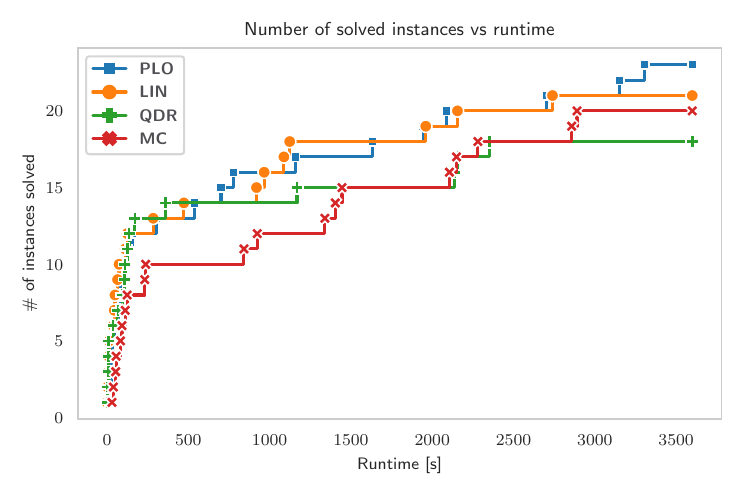}
    \caption{Number of instances per time limit given.} 
    \label{fig:survival_allon}
\end{figure}
\begin{figure}
    \centering
    \includegraphics{./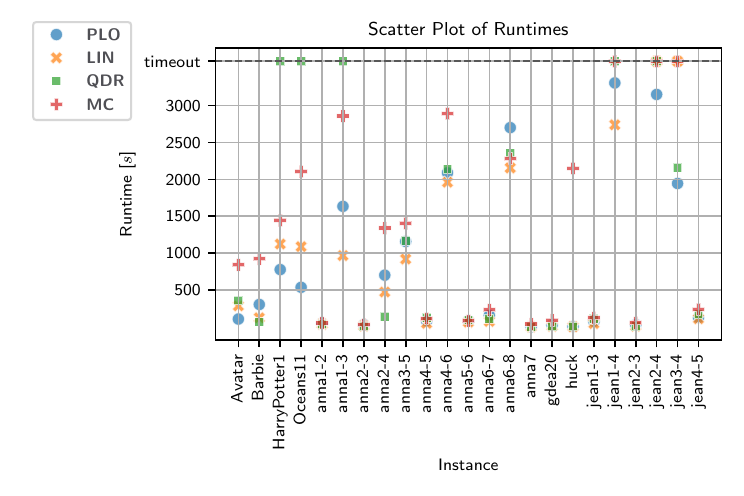}
    \caption{Runtimes of algorithms per instance.} 
    \label{fig:scatterplot_allon}
\end{figure}
Figure~\ref{fig:survival_allon} displays the number of instances solved over time for each algorithm. We observe that the algorithms differ in their ability to solve challenging instances: \plo\ solves the most, followed by \lin\ and \mc, with \qdr\ last. In fact, \plo\ solves one instance that cannot be solved by any other algorithm, and additionally solves six instances that could not be solved by Gronemann et al.~\cite{GronemannJLM16} and three more than \mc\ within the same time limit. Hence, we answer the first part in \qone\ positively. For further illustration, the exact runtimes per instance are also shown in \cref{fig:scatterplot_allon}.

Answering \qtwo, the structural insights as applied in \plo\ reduce the number of constraints by a factor of five on average, comparing \lin\ and \plo. More so, they enhance Gurobi's capabilities of strengthening the LP relaxation, as the two instances not solved by \lin\ are solved by \plo\ in the root, while \lin\ starts branching early and times out.
\qdr\ enters branching in all 23 instances, \plo\ in two, \mc\ in three, \lin\ in seven instances. \plo\ solves 21 out of 23 instances in the root, the remaining two with branching.

Furthermore, we computed the \emph{speedup factor} of \plo, \lin, and \qdr, when compared with \mc\ on instances where both respective algorithms did not time out. 
This factor is the runtime of \mc\ divided by the runtime of, e.g., \plo. The geometric means of these values are 2.6 for \plo, 3.2 for \lin, and 2.7 for \qdr. Hence, our new algorithms are 2.6--3.2 times faster than the state-of-the-art algorithm \mc.%

\subparagraph{Ablation study to answer \qthree.}

\begin{figure}[tbp]
    \centering
    \includegraphics{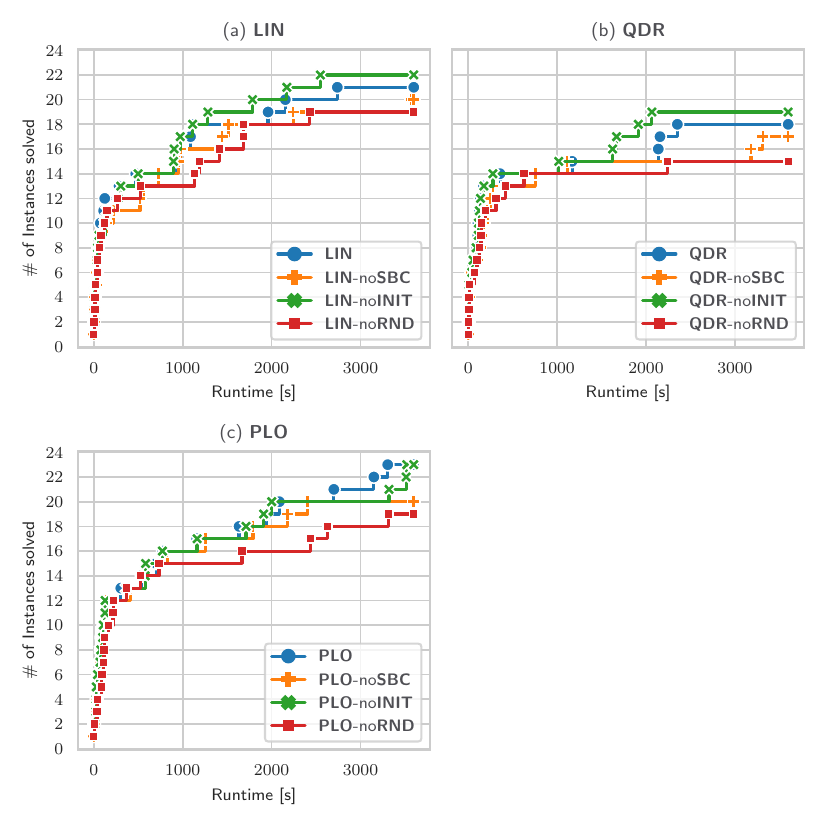}
    \caption{%
    Number of instances solved per time limit given, broken down by algorithm. Algorithms are compared with their counterparts where exactly one component is disabled.}
    \label{fig:survival_ablation}
\end{figure}
\begin{figure}
     \centering
     \includegraphics[height=0.92\textheight]{./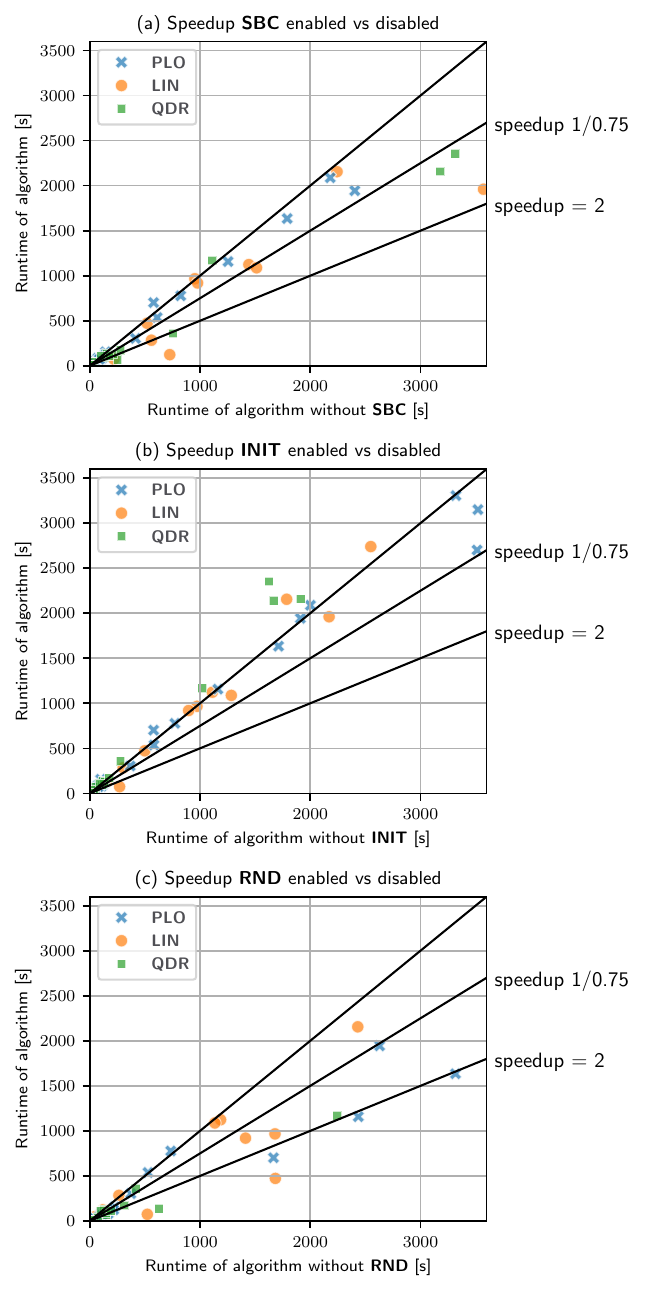}
     \caption{Comparison of runtime improvement for each instance when disabling the respective component. Dots below the top line ($y=x$) depict a runtime improvement. Dots below the remaining lines depict a speedup factor by at least $1/0.75$ and $2$, respectively.}
    \label{fig:speedup_graphs}
\end{figure}
\begin{table}[ht]
\centering
\caption{Geometric means of the speedup factor of each baseline algorithm vs.\ its counterpart where the respective component is disabled.} %
\label{table:speedup}
\renewcommand{\arraystretch}{1.2}
\begin{tabular}{lrrr} %
\toprule
\multicolumn{1}{c}{speedup factor} & \multicolumn{1}{l}{\plo} & \multicolumn{1}{l}{\lin} & \multicolumn{1}{l}{\qdr} \\ \midrule
 \sbc & 1.12 & 1.35 & 1.42  \\ %
 \init & 1.05 & 1.09 & 0.97  \\ %
 \rnd & 1.50 & 1.37 & 1.52 \\ %
 \bottomrule
\end{tabular}
\end{table}

We conduct an ablation study to discern the impact each of the methods proposed in \cref{section:newformulation,section:implementation} has on the algorithms' performance. To this end, we enable all the proposed methods as the baseline configuration for \plo, \lin, and \qdr, namely \sbc, \init~ and \rnd. Then, each component is disabled one at a time to measure the component's impact on overall performance. 
In \cref{table:speedup} we present the speedup factors of the algorithms vs.\ their counterparts with the specific component disabled (see also \cref{fig:speedup_graphs} for more details).
From this table, we conclude that \sbc\ and the \rnd\ are beneficial for all algorithms, while \init\ has a small to no noticeable impact.
This is further supported by \cref{fig:survival_ablation}, which shows that disabling \sbc\ or \rnd, results in all the formulations solving fewer instances (curves with {\sffamily no\sbc} and {\sffamily no\rnd} are below the baseline). Further, \cref{fig:speedup_graphs} depicts the speedup factors per instance.
This is because \sbc\ introduces equalities between two variables, and hence improve presolving capabilities and reduce the search space that solvers have to explore. The heuristics of \rnd\ help the solver find optimal solutions early in the  process. This answers \qthree.

\subparagraph{Further observation.}
By inspecting \cref{table:runtimes} in more detail, we can make the following interesting observations.
\begin{itemize}
    \item \qdr\ enters branching on every instance. This results in \qdr\ solving very small instances faster than the other algorithms, however it struggles to solve larger instances.
    \item The larger the instance, the fewer constraints \plo\ has compared to \lin.
    \item Most of the instances can be solved within the root of the branch-and-bound tree for most formulations.
\end{itemize}

\begin{figure}[tb]
    \centering
    \begin{subfigure}{\textwidth}
        \includegraphics[width=\textwidth]{./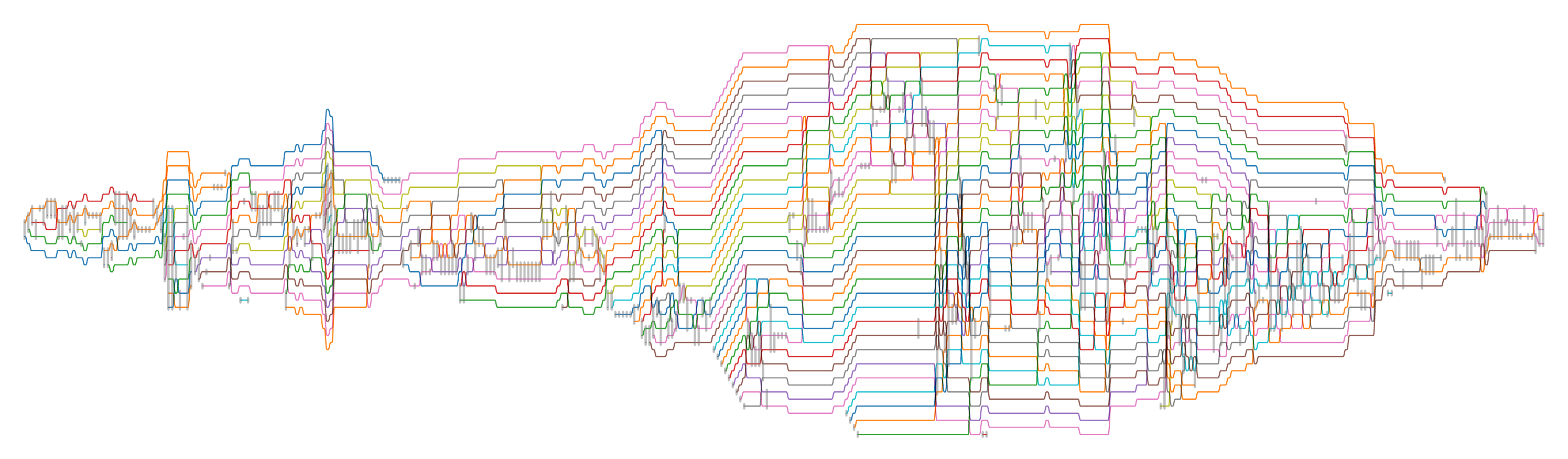}
        \caption{Solution with 765 crossings computed in \qty{0.57}{\second} by a greedy heuristic.}
    \end{subfigure}
    \begin{subfigure}{\textwidth}
        \includegraphics[width=\textwidth]{./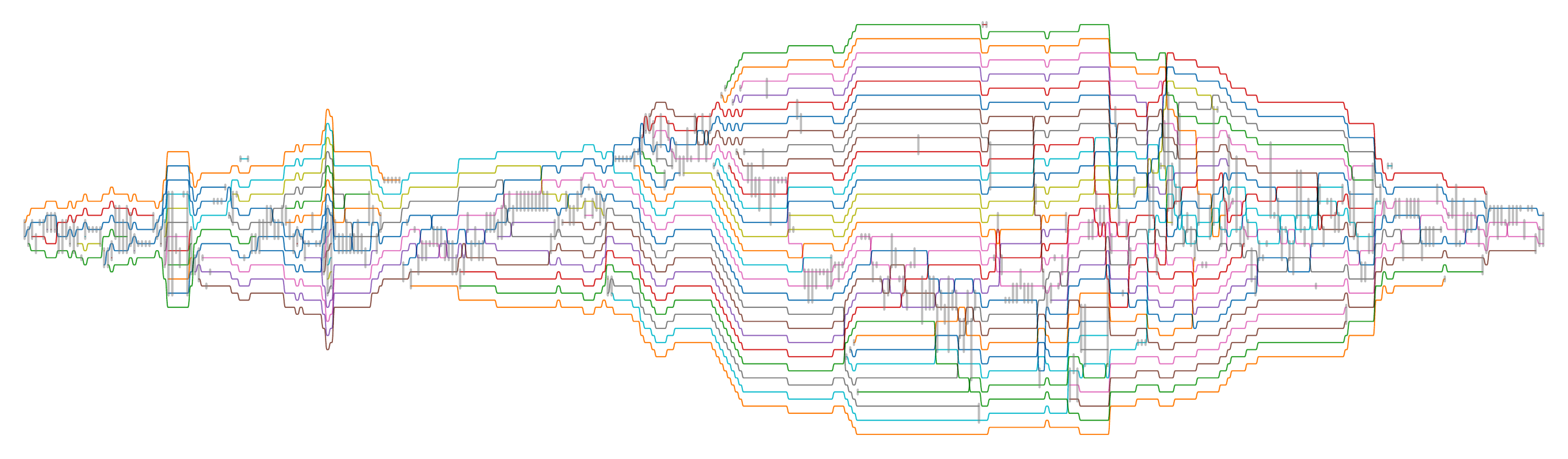}
        \caption{With the minimum number 244 of crossings computed in $\approx7$ hours.} %
    \end{subfigure}    
    \caption{Storyline of Les Mis\'erables (\texttt{jean}) with 80 characters and 402 layers.}
    \label{fig:jean}
\end{figure}

\subparagraph{Large Instances.} Finally, we demonstrate that our implementations are capable of solving even very large instances to proven optimality: %
Figure~\ref{fig:jean}(a) shows the raw drawing of the data for Victor Hugo’s Les Mis\'erables~\cite{lesmiserables} as provided in the data file \texttt{jean.dat} of the Stanford GraphBase~\cite{knuth1993stanford}. After roughly 7 hours of single thread computation, we obtained the proven crossing minimum layout shown in Figure~\ref{fig:jean}(b).

\section{Conclusion and Future Work}
As shown in our experimental study, our new methods and algorithms dominate the state-of-the-art algorithms and are able to solve large instances to optimality, while the newly introduced improvements are beneficial towards all considered formulations. We observe two directions for future work.
\begin{itemize}
    \item Our new components for improvement could be implemented into the max-cut formulation. However, initial experiments have shown that the simple linearized formulation \eqref{LIN-ILP} performs comparably to the more complex max-cut formulation, hence we expect that this will result in a negligible or no improvement over our proposed formulations.
    \item Out of our 59 instances (see  \href{https://doi.org/10.17605/OSF.IO/3BUA2}{\texttt{https://osf.io/3bua2/}}) we were able to solve 55 when increasing the time limit. The remaining four  unsolved instances should pose a challenge to engineer new exact methods for crossing minimization in storylines.
\end{itemize}

\bibliography{literature}

\clearpage

\appendix

\section{Detailed statistics on instances and algorithm executions.}\label{apx:table}
\begin{center}
\begin{longtable}{|l|l|r|r|r|r|r|}
\caption{Table of runtimes for the 23 instances. B\&B = number of branch-and-bound nodes, Root LB = lower bound computed at the root node, number of constraints, overall runtime in seconds, and the crossing number or interval of best lower bound and upper bound if not solved within time limit. The value t.l. means time limit and $|\mathcal{C}|$, $\ell$, and $|\mathcal{I}|$ are the number of characters, layers, and interactions, respectively.} \label{table:runtimes}\\
\hline
Instance & Algorithm & B\&B & Root LB &  {\#}constraints & Time[$s$] & crossings\\\hline
\endfirsthead
\caption{Table of runtimes for the 23 instances (continued)}\\ %
\hline
Instance & Algorithm & B\&B & Root LB &  {\#}constraints & Time[$s$] & crossings\\\hline
\endhead
\textbf{Avatar} & \plo & 1 & 229 & $\num{238496}$ & 106.4 & 229 \\
$|\mathcal{C}|=\text{35}$ & \lin & 628 & 226 & $\num{820944}$ & 284.2 & 229 \\
$\ell=\text{54}$ & \qdr & 3295 & 80 &  & 358.5 & 229 \\
$|\mathcal{I}|=\text{54}$ & \mc & 1 & 229 &  & 841.9 & 229 \\\hline
\textbf{Barbie} & \plo & 1 & 138 & $\num{1125220}$ & 304.0 & 138 \\
$|\mathcal{C}|=\text{67}$ & \lin & 1 & 138 & $\num{6393964}$ & 123.4 & 138 \\
$\ell=\text{63}$ & \qdr & 558 & 48 &  & 66.6 & 138 \\
$|\mathcal{I}|=\text{63}$ & \mc & 1 & 138 &  & 923.6 & 138 \\\hline
\textbf{HarryPotter1} & \plo & 1 & 236 & $\num{520981}$ & 776.9 & 236 \\
$|\mathcal{C}|=\text{52}$ & \lin & 597 & 229 & $\num{2361796}$ & 1123.5 & 236 \\
$\ell=\text{54}$ & \qdr & 11288 & 77 &  & t.l. & [223,239] \\
$|\mathcal{I}|=\text{54}$ & \mc & 1 & 236 &  & 1444.9 & 236 \\\hline
\textbf{Oceans11} & \plo & 1 & 275 & $\num{1271525}$ & 536.9 & 275 \\
$|\mathcal{C}|=\text{60}$ & \lin & 578 & 271 & $\num{7018542}$ & 1087.9 & 275 \\
$\ell=\text{96}$ & \qdr & 11908 & 82 &  & t.l. & [244,284] \\
$|\mathcal{I}|=\text{96}$ & \mc & 1 & 275 &  & 2106.1 & 275 \\\hline
\textbf{anna1-2} & \plo & 1 & 57 & $\num{1370481}$ & 47.8 & 57 \\
$|\mathcal{C}|=\text{57}$ & \lin & 1 & 56 & $\num{6926150}$ & 38.2 & 57 \\
$\ell=\text{116}$ & \qdr & 561 & 38 &  & 37.6 & 57 \\
$|\mathcal{I}|=\text{116}$ & \mc & 1 & 57 &  & 53.1 & 57 \\\hline
\textbf{anna1-3} & \plo & 1 & 108 & $\num{4241743}$ & 1634.2 & 108 \\
$|\mathcal{C}|=\text{83}$ & \lin & 1 & 108 & $\num{30520530}$ & 966.5 & 108 \\
$\ell=\text{164}$ & \qdr & 1792 & 52 &  & t.l. & [104,113] \\
$|\mathcal{I}|=\text{164}$ & \mc & 1 & 108 &  & 2857.8 & 108 \\\hline
\textbf{anna2-3} & \plo & 1 & 28 & $\num{1793109}$ & 29.4 & 28 \\
$|\mathcal{C}|=\text{67}$ & \lin & 1 & 28 & $\num{10538684}$ & 12.8 & 28 \\
$\ell=\text{106}$ & \qdr & 100 & 22 &  & 9.3 & 28 \\
$|\mathcal{I}|=\text{106}$ & \mc & 1 & 28 &  & 30.6 & 28 \\\hline
\textbf{anna2-4} & \plo & 9 & 76 & $\num{3738699}$ & 701.4 & 76 \\
$|\mathcal{C}|=\text{80}$ & \lin & 1 & 76 & $\num{25966988}$ & 472.7 & 76 \\
$\ell=\text{155}$ & \qdr & 562 & 47 &  & 136.3 & 76 \\
$|\mathcal{I}|=\text{155}$ & \mc & 2 & 76 &  & 1339.5 & 76 \\\hline
\textbf{anna3-5} & \plo & 1 & 107 & $\num{4882063}$ & 1157.5 & 107 \\
$|\mathcal{C}|=\text{88}$ & \lin & 1 & 107 & $\num{36830405}$ & 919.4 & 107 \\
$\ell=\text{168}$ & \qdr & 2570 & 59 &  & 1168.5 & 107 \\
$|\mathcal{I}|=\text{168}$ & \mc & 1 & 107 &  & 1404.2 & 107 \\\hline
\pagebreak
\textbf{anna4-5} & \plo & 1 & 70 & $\num{1568650}$ & 80.9 & 70 \\
$|\mathcal{C}|=\text{60}$ & \lin & 1 & 70 & $\num{8222852}$ & 48.9 & 70 \\
$\ell=\text{120}$ & \qdr & 1642 & 37 &  & 124.8 & 70 \\
$|\mathcal{I}|=\text{120}$ & \mc & 1 & 70 &  & 112.1 & 70 \\\hline
\textbf{anna4-6} & \plo & 1 & 157 & $\num{3228069}$ & 2087.8 & 157 \\
$|\mathcal{C}|=\text{71}$ & \lin & 119 & 157 & $\num{19956977}$ & 1960.4 & 157 \\
$\ell=\text{176}$ & \qdr & 1090 & 77 &  & 2138.1 & 157 \\
$|\mathcal{I}|=\text{176}$ & \mc & 1 & 157 &  & 2890.4 & 157 \\\hline
\textbf{anna5-6} & \plo & 1 & 76 & $\num{1821015}$ & 84.6 & 76 \\
$|\mathcal{C}|=\text{63}$ & \lin & 1 & 76 & $\num{10021591}$ & 64.9 & 76 \\
$\ell=\text{127}$ & \qdr & 782 & 37 &  & 92.6 & 76 \\
$|\mathcal{I}|=\text{127}$ & \mc & 1 & 76 &  & 83.0 & 76 \\\hline
\textbf{anna6-7} & \plo & 1 & 78 & $\num{1586302}$ & 155.5 & 78 \\
$|\mathcal{C}|=\text{60}$ & \lin & 1 & 78 & $\num{8300109}$ & 73.9 & 78 \\
$\ell=\text{118}$ & \qdr & 917 & 33 &  & 110.2 & 78 \\
$|\mathcal{I}|=\text{118}$ & \mc & 1 & 78 &  & 232.6 & 78 \\\hline
\textbf{anna6-8} & \plo & 193 & 121 & $\num{2266237}$ & 2700.7 & 121 \\
$|\mathcal{C}|=\text{65}$ & \lin & 556 & 118 & $\num{13025530}$ & 2155.1 & 121 \\
$\ell=\text{146}$ & \qdr & 3193 & 52 &  & 2352.8 & 121 \\
$|\mathcal{I}|=\text{146}$ & \mc & 1 & 121 &  & 2280.4 & 121 \\\hline
\textbf{anna7} & \plo & 1 & 8 & $\num{525061}$ & 0.6 & 9 \\
$|\mathcal{C}|=\text{47}$ & \lin & 1 & 8 & $\num{2151380}$ & 0.6 & 9 \\
$\ell=\text{62}$ & \qdr & 3 & 6 &  & 0.4 & 9 \\
$|\mathcal{I}|=\text{62}$ & \mc & 824 & 8 &  & 39.2 & 9 \\\hline
\textbf{gdea20} & \plo & 1 & 41 & $\num{108934}$ & 26.5 & 41 \\
$|\mathcal{C}|=\text{19}$ & \lin & 1 & 41 & $\num{212494}$ & 11.3 & 41 \\
$\ell=\text{100}$ & \qdr & 171 & 19 &  & 8.6 & 41 \\
$|\mathcal{I}|=\text{100}$ & \mc & 1 & 41 &  & 92.2 & 41 \\\hline
\textbf{huck} & \plo & 1 & 41 & $\num{2130471}$ & 5.7 & 42 \\
$|\mathcal{C}|=\text{74}$ & \lin & 1 & 41 & $\num{13537255}$ & 3.3 & 42 \\
$\ell=\text{107}$ & \qdr & 140 & 30 &  & 4.5 & 42 \\
$|\mathcal{I}|=\text{107}$ & \mc & 9547 & 41 &  & 2148.8 & 42 \\\hline
\textbf{jean1-3} & \plo & 1 & 53 & $\num{3616478}$ & 72.8 & 53 \\
$|\mathcal{C}|=\text{73}$ & \lin & 1 & 53 & $\num{24790611}$ & 44.4 & 53 \\
$\ell=\text{253}$ & \qdr & 1070 & 36 &  & 108.9 & 53 \\
$|\mathcal{I}|=\text{253}$ & \mc & 1 & 53 &  & 123.5 & 53 \\\hline
\textbf{jean1-4} & \plo & 1 & 172 & $\num{5643798}$ & 3306.5 & 172 \\
$|\mathcal{C}|=\text{79}$ & \lin & 1 & 172 & $\num{42117296}$ & 2739.6 & 172 \\
$\ell=\text{329}$ & \qdr & 1792 & 75 &  & t.l. & [149,183] \\
$|\mathcal{I}|=\text{329}$ & \mc & 1 & 156 &  & t.l. & [156,565] \\\hline
\textbf{jean2-3} & \plo & 1 & 33 & $\num{663871}$ & 30.7 & 33 \\
$|\mathcal{C}|=\text{41}$ & \lin & 1 & 33 & $\num{2671943}$ & 11.0 & 33 \\
$\ell=\text{158}$ & \qdr & 181 & 24 &  & 7.5 & 33 \\
$|\mathcal{I}|=\text{158}$ & \mc & 1 & 33 &  & 56.3 & 33 \\\hline
\pagebreak
\textbf{jean2-4} & \plo & 1 & 154 & $\num{1358514}$ & 3150.6 & 154 \\
$|\mathcal{C}|=\text{47}$ & \lin & 210 & 154 & $\num{6283384}$ & t.l. & [154,159] \\
$\ell=\text{234}$ & \qdr & 1587 & 68 &  & t.l. & [138,156] \\
$|\mathcal{I}|=\text{234}$ & \mc & 1 & 148 &  & t.l. & [148,468] \\\hline
\textbf{jean3-4} & \plo & 1 & 128 & $\num{780027}$ & 1943.7 & 128 \\
$|\mathcal{C}|=\text{41}$ & \lin & 330 & 126 & $\num{3209062}$ & t.l. & [127,132] \\
$\ell=\text{175}$ & \qdr & 1720 & 53 &  & 2157.5 & 128 \\
$|\mathcal{I}|=\text{175}$ & \mc & 1 & 127 &  & t.l. & [127,170] \\\hline
\textbf{jean4-5} & \plo & 1 & 96 & $\num{521469}$ & 125.7 & 96 \\
$|\mathcal{C}|=\text{36}$ & \lin & 1 & 96 & $\num{1939809}$ & 111.3 & 96 \\
$\ell=\text{149}$ & \qdr & 1476 & 50 &  & 171.2 & 96 \\
$|\mathcal{I}|=\text{149}$ & \mc & 1 & 96 &  & 238.0 & 96 \\\hline
\end{longtable}
\end{center}

\end{document}